\documentclass[lettersize,journal]{IEEEtran}
\usepackage{amsmath,amsfonts}
\usepackage{amssymb}
\usepackage{bbm}
\usepackage{amsthm}
\usepackage{algorithmicx}
\usepackage{algorithm}
\usepackage{array}
\usepackage{color}
\usepackage[caption=false,font=normalsize,labelfont=sf,textfont=sf]{subfig}
\usepackage{makecell}
\usepackage{textcomp}
\usepackage{tabularx}
\usepackage{stfloats}
\usepackage{url}
\usepackage{verbatim}
\usepackage{graphicx}
\usepackage{cite}
\usepackage{multirow}
\usepackage{multicol}
\usepackage{ragged2e}
\newtheorem{Lemma}{ \textit{Lemma}}

\newtheorem{pro}{Proposition}
\newtheorem{thm}{Theorem}
\newtheorem{Col}{Corollary}
\usepackage{algpseudocode}

\usepackage{CJKutf8}
\begin{document}

\title{Stochastic Geometry Analysis and Optimization for Multi-LEO Satellite-Assisted Communications}

\author{Xu Hu,~\IEEEmembership{Graduate Student Member,~IEEE}, Bin Lin,~\IEEEmembership{Senior Member,~IEEE}, Ping Wang,~\IEEEmembership{Fellow,~IEEE}, Xiao Lu, ~\IEEEmembership{Member,~IEEE}, Weihua Zhuang,~\IEEEmembership{Fellow,~IEEE}, and Shuang Qi
\vspace{-1\baselineskip}
\thanks{
The work was supported in part by the National Natural Science Foundation of China (No. 62371085), in part by Dalian Science and Technology Innovation Fund (No. 2025RG06), and in part by Guangxi Science and Technology Infrastructure and Talent Development Program (No. GUIKE AD25069109). An earlier version of this paper was accepted in part at the IEEE International Conference on Communications (ICC 2026), Glasgow, Scotland, UK, 2026. {\it{(Corresponding author: Bin Lin.)}}

Copyright (c) 20xx IEEE. Personal use of this material is permitted. However, permission to use this material for any other purposes must be obtained from the IEEE by sending a request to pubs-permissions@ieee.org. Xu Hu and Bin Lin are with the Information Science and Technology College, Dalian Maritime University, Dalian 116026, China (e-mail: huxu@dlmu.edu.cn; binlin@dlmu.edu.cn). Ping Wang is with the Department of Electrical Engineering and Computer Science, York University, Toronto, ON M3J 1P3, Canada (e-mail: pingw@yorku.ca). Xiao Lu is with the Research and Development, Ericsson, Ottawa, ON K2K 2V6, Canada (e-mail: xiao.lu@ericsson.com). Weihua Zhuang is with the Department of Electrical and Computer Engineering, University of Waterloo, Waterloo, ON N2L 3G1, Canada (e-mail: wzhuang@uwaterloo.ca). Shuang Qi is with the School of Electronic and Information Engineering, Liaoning University of Technology, Jinzhou 121001, China (e-mail: dxqishuang@lnut.edu.cn)

This manuscript has been accepted by IEEE Transactions on Vehicular Technology, DOI: 10.1109/TVT.2026.3690149.
}
}



\maketitle

\begin{abstract}

Low Earth orbit (LEO) satellites are vital in providing seamless connectivities to ground users (GUs). However, complex satellite-to-ground propagation and non-negligible satellite failure probabilities make single-satellite transmission unreliable. To address this issue, we consider a multi-LEO-satellite communication network (mLEO-CN), where a typical GU is served by multiple LEO satellites. The satellite locations are modeled as a binomial point process on a sphere and ordered according to their distances to the GU. To improve communication reliability, a serial communication scheme is adopted, in which the subsequent satellite is activated only when all preceding satellites fail to satisfy the target signal-to-noise ratio threshold. By employing the tool of stochastic geometry, we derive key performance metrics, including overall transmission success probability, average transmission rate, and end-to-end delay under the serial communication scheme by incorporating both shadowed Rician (SR) fading and satellite failure probabilities. To improve analytical tractability, we derive asymptotic expressions for the end-to-end transmission success probability in two special cases by leveraging the asymptotic expression of the SR fading model and a Gamma-distribution, and further propose a cluster-based approximation approach for the overall transmission success probability. Building on these analyses, we formulate an optimization problem to maximize the ratio of the overall transmission success probability to the end-to-end delay, and obtain its solution using a one-dimensional search approach. Numerical results validate the analysis, reveal the impact of key parameters, and identify the optimal number of serving LEO satellites under different constellation altitudes.

\end{abstract}

\begin{IEEEkeywords}
multi-LEO-satellite communication networks, stochastic geometry, cluster-based approximation approach.
\end{IEEEkeywords}

\section{Introduction}
\IEEEPARstart{I}{n} recent years, low Earth orbit (LEO) satellites have risen as a promising solution for providing seamless network services in underserved areas, including deserts, oceans, and rural regions, thereby facilitating ubiquitous and high-capacity connectivities \cite{LEO1}. Currently, many LEO satellite constellation projects have launched a series of LEO satellites, driving various new services such as direct user-to-satellite communications and inter-satellite communications \cite{project1}. Despite these advancements, the complex and dynamic nature of the satellite-to-ground (as well as ground-to-satellite) propagation environment poses severe obstacles to maintaining high-quality links. Specifically, due to poor communication conditions, such as severe channel fading induced by signal scattering, these limitations make it difficult to meet traffic demands from ground users (GUs). Meanwhile, given the large scale of the constellation and the long operational lifespan of LEO satellites, they are increasingly vulnerable to threats stemming from various factors, such as high-energy particles, solar storms, component malfunctions, and anti-satellite weapons, resulting in temporarily disabling or permanently destroying satellites \cite{failure3}. Consequently, communication reliability with LEO satellites decreases over time. Specifically, the reliability factor, initially at one, gradually decreases to 0.8 within two years and further drops to 0.6 within five years \cite{decade}. Therefore, reliance on a single satellite is highly vulnerable, as it suffers not only from unstable link quality but also from the potential for service interruption.

In light of these threats, constructing a multi-LEO satellite-assisted communication network (mLEO-CN), where a GU is served by multiple LEO satellites, is essential to enhance the reliability of LEO communication networks. This network leverages the enhanced signal strength from multiple LEO satellites, mitigating the limitations of poor signal reception from an individual satellite \cite{Zhang24Multi}. To quantitatively evaluate the performance benefits of multiple LEO satellites, it is vital to establish a theoretical framework for assessing the performance of mLEO-CN in comparison with the individual LEO satellite-served network.

Stochastic geometry is a potent mathematical instrument extensively utilized to characterize the system-level performance of large-scale wireless networks \cite{Lu3}. By adopting distribution models such as the binomial point process (BPP) and the Poisson point process (PPP) to represent the locations of network nodes, this methodology has provided valuable insights into critical performance metrics across various network domains. Recent applications of stochastic geometry in analyzing large-scale LEO satellite communication systems have underscored its utility in evaluating the performance of LEO satellite networks \cite{okati2020downlink,22-9,23-1}. Building on this foundation, our objective is to construct an analytical framework, leveraging stochastic geometry, to characterize the network performance of the mLEO-CN and to delineate the effects of key network parameters on the end-to-end signal transmission performance.  

\subsection{Related Works}
Most existing studies concentrate on assessing network performance facilitated by an individual satellite, with particular emphasis on the nearest LEO satellite serving a GU. In the early stage, Yastrebova \emph{et al.} analyzed the uplink coverage probability (CP) for a terrestrial–satellite network by modeling the distribution of LEO satellites as the uniform distribution along each orbit \cite{Nearest-D-U1}. However, the computational complexity increases with the rising number of orbits in this model. Subsequently, the accuracy of modeling LEO satellite constellations as a BPP was validated by establishing a connection between the BPP and the Fibonacci lattice \cite{okati2020downlink}. Furthermore, the contact distance distribution for the BPP on spherical surfaces was proposed in \cite{talgat2020nearest}. Building on these foundational studies, a growing number of studies have since explored the performance of LEO satellite networks using BPP models. In \cite{okati2020downlink}, Okati \emph{et al.} constructed frameworks to capture the downlink CP and average data rate of a single-layer LEO satellite network by employing the Rayleigh fading model. Then, Talgat \emph{et al.} extended the single-layer LEO satellite network to a multi-layer spherical-shell architecture and conducted a corresponding coverage probability analysis \cite{Talgat2021Stochastic}. To accurately describe the characteristics of the satellite-to-ground link, the precise outage probability was derived under the shadowed Rician (SR) fading model \cite{onlynoise4}. Since then, the performance analysis of modeling LEO satellites as BPPs and using the SR fading model has been explored gradually.

Since the received signal strength in a BPP model depends on each receiver's location, the network performance of LEO satellite communication networks is influenced by each LEO satellite. Consequently, computational complexity increases with the number of LEO satellites, making performance analysis more challenging with a BPP model than with a homogeneous spherical PPPs (SPPPs) model \cite{23-1}. To this end, the SPPP model has been applied to LEO satellite networks for tractable analysis. For instance, the downlink CP under the Nakagami-$m$ fading model was derived in \cite{23-1} and the moments and meta-distribution of conditional CP were conducted in \cite{sun2023fine}. Despite a series of studies employing the SPPP to model the distribution of LEO satellites, the suitability of the SPPP model for LEO networks with finite areas and limited nodes remains controversial.


Currently, many complex distribution models have been introduced to accurately capture the practical characteristics of LEO satellite communication networks. Specifically, to account for the motion of LEO satellites in their orbital planes into the modeling of delay-tolerant data harvesting architectures, Choi utilized the Cox point process to represent the distribution of LEO satellites \cite{choi2024modeling}. To accurately characterize the safety distance constraint for LEO satellites, Zhang \emph{et al.} in \cite{zhang2025leo} proposed a novel Poisson arc hardcore point process (PAHPP) model to precisely delineate the distribution of LEO satellites operating in circular orbits.

More recently, the research focus has gradually shifted from single-satellite services to communication networks enabled by multiple LEO satellites. Under coherent joint transmissions, Shang \emph{et al.} derived the joint distance distribution between cooperative LEO satellites and a GU, thereby establishing a theoretical foundation for analyzing downlink coverage probability in coordinated transmission scenarios \cite{Shang2023coverage}. Building on this foundation, subsequent works extended the analysis to both downlink and uplink coverage performance \cite{Li2024An, Emara2025A}. After that, Li \emph{et al.} further extended this line of research to cell-free massive multiple-input multiple-output (MIMO) satellite communication networks \cite{Li2026Downlink}. On this basis, the concept of satellite cluster networks was introduced in \cite{Lee2025Analyzing}, followed by the development of a mathematical framework for evaluating coverage probability under both coherent and non-coherent joint transmissions \cite{Lee2025Analyzing, Yang2026Adaptive}. Beyond communication performance, satellite clusters have also been applied to positioning services. For instance, by integrating multiple LEO satellites into existing medium Earth orbit (MEO)-based positioning systems, the $K$-availability and $K$-localizability probabilities were analyzed to quantify localization performance improvements \cite{wang2025analyzing}. Moreover, according to their functional roles within a cluster, LEO satellites can be categorized into leaders and followers, and a corresponding analytical framework was proposed to evaluate the communication performance of leader-follower satellite architectures \cite{Ru2026Coverage}. For parallel transmissions, Qu \emph{et al.} derived analytical expressions for the coverage probability and optimized satellite density to maximize this metric \cite{Qu2026Coverage}. In addition, cooperative transmission based on coordinated beamforming among the several nearest satellites has been studied for further communication reliability enhancement. \cite{kim2024coverage}.

Despite significant progress in multi-satellite-assisted communications, most existing schemes, such as coherent joint transmission, non-coherent joint transmission, and parallel transmission, rely on the simultaneous participation of multiple LEO satellites, resulting in high energy consumption, considerable signaling overhead, and stringent synchronization and coordination requirements. In addition, satellite cluster networks generally require cluster formation, inter-satellite cooperation, and role assignment, which further increase implementation complexity in dynamic LEO environments. Motivated by these limitations, it is of great interest to develop a more efficient communication scheme for the mLEO-CN. In this context, the serial communication scheme is particularly promising, as it activates LEO satellites sequentially, from the nearest to the farthest, such that subsequent satellites are involved only when necessary. This scheme can improve resource utilization, reduce unnecessary energy costs, and enhance communication reliability with low implementation complexity. Nevertheless, the performance characterization of such mLEO-CNs remains challenging. First, the sequential involvement of multiple LEO satellites results in increasing complexity as the number of LEO satellites grows. Second, realistic satellite-to-ground (as well as ground-to-satellite) channels require the adoption of the SR fading model to capture scattering, reflection, diffraction, shadowing, and multipath effects, which further complicates the analysis. Third, the end-to-end performance depends on both uplink and downlink transmissions, making the characterization more difficult. Finally, existing studies rarely account for satellite failures caused by harsh space environments, despite their non-negligible impact on network reliability and performance. This drives our research to tackle the aforementioned challenges and bridge the existing gap.

\subsection{Contributions and Organization}

In this paper, we consider an mLEO-CN, where a GU is served by multiple LEO satellites. To enhance communication reliability, we adopt a serial communication scheme in which LEO satellites provide network services to the GU sequentially, from the closest to the farthest, and each subsequent satellite is involved only if all preceding satellites fail to satisfy the predefined SNR threshold. The main technical contribution of this paper is the development of a tractable analytical framework for performance characterization under the serial communication scheme. On this basis, the asymptotic analysis, the cluster-based approximation approach, and optimization are complementary developments that enhance theoretical understanding, improve analytical tractability, and facilitate practical design of multi-LEO-satellite-assisted communication systems. To better reflect practical operating conditions, we incorporate failure probability into the stochastic geometry analysis of LEO satellite networks to capture satellite malfunctions in space environments. The contributions of this paper are summarized as follows.


\begin{itemize}
\item We develop a tractable analytical framework for the mLEO-CN to characterize the end-to-end transmission performance under the serial communication scheme. Specifically, by leveraging stochastic geometry and jointly accounting for channel fading and potential satellite failures, the proposed framework enables the evaluation of key performance metrics, including the overall transmission success probability, average transmission rate capacity, and end-to-end delay. To further improve analytical tractability and provide additional theoretical insight, we also derive asymptotic closed-form expressions for the end-to-end transmission success probability in two special cases.


\item Building upon the developed analytical framework, we further propose a cluster-based approximation approach to improve the tractability of evaluating the overall transmission success probability. In this approach, the serving LEO satellites are partitioned into multiple clusters, and the performance of each cluster is represented by that of its central satellite. Accordingly, the cumulative product associated with all serving satellites is approximated by the corresponding power of the representative satellite’s performance within each cluster, thereby significantly reducing computational complexity while maintaining high accuracy.

\item Extensive simulations are conducted to validate the proposed analytical framework and verify the accuracy of the derived exact and asymptotic expressions. In addition, the simulation results confirm the effectiveness of the proposed cluster-based approximation approach and illustrate the impact of key parameters, including constellation size, constellation altitude, predefined SNR threshold, and satellite failure probability, on the end-to-end transmission performance.

\item Based on the proposed analytical framework, we formulate an optimization problem to maximize the ratio of the overall transmission success probability to the end-to-end delay. The corresponding optimization results identify the optimal number of serving satellites at different constellation altitudes, thereby providing practical design guidance for LEO satellite networks. 
  \end{itemize}

The rest of this paper is organized as follows. Section \uppercase\expandafter{\romannumeral2} describes the system model, including the network model, communication scheme, channel model, and signal model. Section \uppercase\expandafter{\romannumeral3} derives the key transmission performance metrics for the mLEO-CN, including the overall transmission success probability, the average transmission rate capacity, and the end-to-end delay. Moreover, this section provides asymptotic expressions and an approximation approach for evaluating the overall transmission success probability. Section \uppercase\expandafter{\romannumeral4} shows extensive numerical results and the solutions to the formulated optimization problem. Section \uppercase\expandafter{\romannumeral5} concludes this work.

\section{System Model}
\subsection{Network Model}
 \begin{figure}[t]\label{figure1}
  \centerline{\includegraphics[scale=0.55]{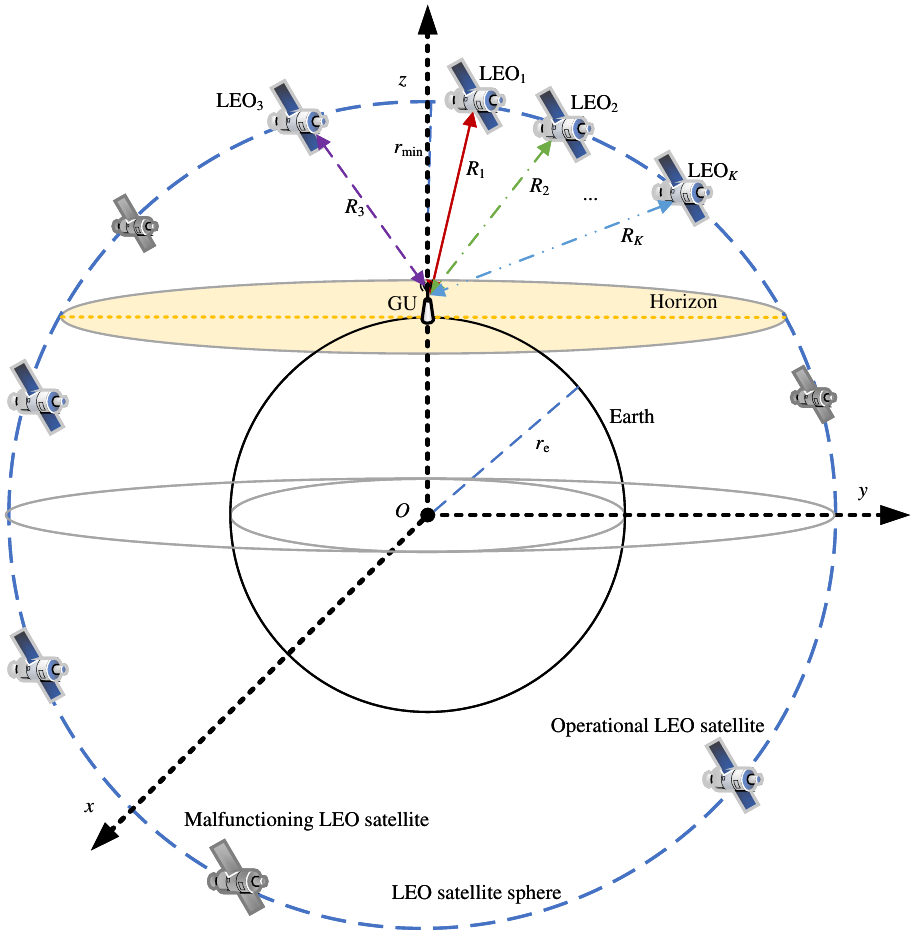}}
 	\caption{A geometric structure of mLEO-CN.}
    \label{figure1}
 \end{figure}
We consider a multi-LEO-satellite communication network (mLEO-CN) that includes an LEO satellite constellation and a typical GU, as shown in Fig. \ref{figure1}. We establish a three-dimensional space using the Cartesian coordinate system, where the Earth's center is at the origin ${O} \buildrel \Delta \over = \left( {0,0,0} \right) \in {{\mathbb{R}}^3}$, the $x$-axis points towards a specific point on the equator, the $y$-axis is perpendicular to the $x$-axis and lies within the equatorial plane, and the $z$-axis points towards the geographic North Pole, perpendicular to the equatorial plane. Without loss of generality, the GU is located on the Earth's surface at observation point $\left( {0,0,{r_{\rm{e}}}} \right)$ \cite{23-1} and can communicate directly with LEO satellites. In the LEO satellite constellation, there are $N$ LEO satellites uniformly distributed on the sphere surface $\mathbb{S} _{{r_{\rm{a}}}}^2$ forming a BPP with radius of ${r_{\rm{a}}}$, where ${r_{\rm{a}}} = {r_{\rm{e}}} + {r_{\min }}$ and ${r_{\min }}$ is the constellation altitude and ${\Phi _{{\rm{LEO}}}}{\rm{ = }}\left\{ {{s_1},{s_2},...,{s_N}} \right\}\left( {N \ge 1,N \in {\mathbb{N}^ + }} \right)$ presents the set of LEO satellites. We rank LEO satellites in the mLEO-CN based on their distance to the GU, designating each of them as the ${k^{{\rm{th}}}}\left( {1 \le k \le N} \right)$ LEO satellite, from the closest to the farthest. Correspondingly, the distance between the ${k^{{\rm{th}}}}$ LEO satellite and the GU is denoted as ${R_k}$. The GU can obtain the ordering of LEO satellites using satellite orbital data and ephemeris information in practical LEO satellite communication systems. Moreover, the maximum possible distance between the GU and an LEO satellite over its horizon, known as the visible distance, is represented by ${r_{\max }}{\rm{ = }}\sqrt {2{r_{\rm{e}}}{r_{\min }} + r_{\min }^2}$ \cite{okati2020downlink}. Accordingly, the LEO satellites over the horizon are referred to as visible satellites, forming a subset of LEO satellites, denoted as ${\Phi _{\rm{vis}}}{\rm{ = }}\left\{ {{s_1},{s_2},...,{s_M}} \right\}\left( {M \le N} \right)$ \cite{CDFofSR2}. In the mLEO-CN involving $K$ LEO satellites, the serving satellites, denoted as ${\Phi _s}{\rm{ = }}\left\{ {{s_1},{s_2},...,{s_K}} \right\}\left( {K \le M} \right)$, are defined as a subset of $K$ visible LEO satellites that are capable of communicating with the GU.

Signal transmissions in the mLEO-CN involve data transfer from the GU to an LEO satellite, followed by processing and subsequent feedback to the GU. To support this, $K$ orthogonal frequency channels are allocated within the Ka-band spectrum, and a decode-and-forward (DF) protocol is employed to facilitate the end-to-end signal transmissions \cite{DFprotocol}. To better characterize the operating state of an LEO satellite, we introduce a satellite failure probability $q_k$ to quantify the likelihood of the ${k^{{\rm{th}}}}$ LEO satellite malfunctions caused by potential threats such as solar storms, high-velocity collisions, and other space hazards. Accordingly, a malfunctioning satellite is treated as unavailable for transmission and is excluded from the set of serving satellites, while the communication scheme remains unchanged for the operational satellites.

\subsection{Communication Scheme}
To enhance the communication reliability of LEO satellite communication networks, we implement a serial communication scheme with the serial probing mechanism in the mLEO-CN to regulate end-to-end signal transmissions. Specifically, the GU initially sends a pilot signal to the nearest LEO satellite over the first channel, where the satellite measures the received SNR. If the measured SNR exceeds a predefined threshold, the satellite responds by sending a feedback pilot signal to the GU on the same channel. The GU then evaluates the SNR of the returned signal. If the SNR at the GU also exceeds the predefined threshold, the GU selects the first LEO satellite for end-to-end signal transmissions. However, if the received SNR at either the first LEO satellite or the GU does not satisfy the threshold, the GU proceeds by sending a pilot signal to the second LEO satellite, and the same measurement and evaluation procedure is repeated to determine if the second satellite can be selected. This iterative procedure continues until a suitable LEO satellite is identified for transmission. If none of the $K$ satellites in the mLEO-CN can satisfy the predefined SNR threshold requirement, the signal transmission process is terminated.

\subsection{Channel Model} 
\subsubsection{Path-loss Model} 
A distance-dependent power path loss model is utilized to represent the path loss between the GU and the ${k^{{\rm{th}}}}$ LEO satellite \cite{24-47}, that is, $L_k\left( {{R_k}} \right) = {\left( {\frac{c }{{4\pi {f_c}}}} \right)^2}R_k^{ - {\alpha}}$, where ${R_k}$ denote the distance between the ${k^{{\rm{th}}}}$ LEO satellite and the GU, $c$ is the light speed, ${{f_c}}$ denotes carrier frequency of the signal, and ${\alpha}$ is the path loss exponent. 

\subsubsection{Fading model}
In a satellite-to-ground (and ground-to-satellite) channel, the SR fading model, widely validated in satellite communication systems, is adopted, with the cumulative distribution function (CDF) of the channel gain ${\left| H \right|^2}$, described as \cite{CDFofSR2, pdfofsr} 
\begin{equation}\label{CDFofSR}
{F_{{{\left| {H} \right|}^2}}}(x)  = 1 - \mu \sum\limits_{n = 0}^{m - {\rm{1}}} {{\cal D}  \sum\limits_{l = 0}^n {\frac{{n!}}{{l!}}}} {\cal J}{ {x^l}{{\rm{e}}^{ - \left( {\beta  - \delta } \right)x}}},
\end{equation}
\noindent where ${\cal D}{\rm{ = }}\frac{{{{\left( {1 - m} \right)}_n}{{\left( { - \delta } \right)}^n}}}{{{{\left( {n!} \right)}^2}}}$, ${\cal J}{\rm{ = }}{\left( {\beta  - \delta } \right)^{ - \left( {n + 1 - l} \right)}}$. Moreover, $\mu  = \frac{1}{{2{{b_{\rm{s}}}}}}{\left( {\frac{{2{{b_{\rm{s}}}}m}}{{2{{b_{\rm{s}}}}m + \Omega }}} \right)^m}$, $\delta  = \frac{1}{{2{{b_{\rm{s}}}}}}\left( {\frac{\Omega }{{2{{b_{\rm{s}}}}m + \Omega }}} \right)$, $\beta  = \frac{1}{{2{{b_{\rm{s}}}}}}$ with ${\rm{\Omega }}$, $m$ and ${{b_{\rm{s}}}}$ being the average power of the line-of-sight (LoS) component, the Nakagami parameter and the half average power of the multi-path components except the LoS component. In addition, ${\left( x \right)_n}$ is the Pochhammer symbol. Accordingly, the probability density function (PDF) of the channel gain ${\left| H \right|^2}$ under SR fading is expressed as \cite{pdfofsr}
\begin{equation}\label{pdfofSR}
\begin{aligned}
{f_{{{\left| H \right|}^2}}}\left( x \right) = \mu \sum\limits_{n = 0}^{m - 1} {{\cal D}{x^n}{{\mathop{\rm e}\nolimits} ^{ - \left( {\beta  - \delta } \right)x}}}.
\end{aligned}
\end{equation}

\subsection{Signal Model}
For the uplink signal transmission from the GU to the ${k^{{\rm{th}}}}$ LEO satellite, the received SNR at the ${k^{{\rm{th}}}}$ LEO satellite is expressed as
\begin{equation}\label{snru}
{\rm{SNR}}_k^{\rm{u}} = \frac{{{p_{\rm{u}}}G_k^{\rm{u}}L_k\left( {{R_k}} \right){\left| {H_k^{\rm{u}}} \right|^2}}}{{\sigma _{k}^2}}, 
\end{equation}
\noindent where ${p_{\rm{u}}}$ is the transmit power of the GU, $G_k^{\rm{u}}$ presents the uplink antenna gain. In the mLEO-CN, each satellite is equipped with a directional antenna that emits a main lobe, directing one beam towards the center of the Earth. The GU is located within the beam of each LEO satellite, while the GU has an antenna to communicate with LEO satellites. Consequently, $G_k^{\rm{u}} = G_{\rm{u}}^{\rm{t}}G_{{\rm{s}},k}^{\rm{r}}$, where $G_{\rm{u}}^{\rm{t}}$ is the transmit antenna gain of the GU, and $G_{{\rm{s}},k}^{\rm{r}}$ presents the receive antenna gain of the ${k^{{\rm{th}}}}$ LEO satellite. Moreover, ${\sigma _k^2}$ is the received noise power in the ${k^{{\rm{th}}}}$ channel, ${{\left| {H_k^{\rm{u}}} \right|}^2}$ denotes the power gain of the uplink channel from the GU to the ${k^{{\rm{th}}}}$ LEO satellite, and the channel gains of the $K$ uplink channels between the GU and $K$ LEO satellites are mutually independent random variables. According to \cite{onlynoise}, numerical results show that the interference power in satellite communication is significantly lower than the noise power. Additionally, the ample bandwidth in satellite communications and the employed narrow beam transmission modes can enable orthogonal frequency allocation, thereby reducing co-channel interference in the mLEO-CN \cite{onlynoise2}. Therefore, as with most existing studies \cite{xujiajie1, onlynoise4, onlynoise}, we assume a noise-dominated propagation environment where co-channel interference is considered negligible. Accordingly, this assumption is more appropriate for scenarios in which spectrum resources are relatively abundant and communication reliability is prioritized over aggressive frequency reuse.

Accordingly, for the downlink signal transmission from the ${k^{{\rm{th}}}}$ LEO satellite to the GU, the received SNR at the GU is given by
\begin{equation}\label{snrd}
{\rm{SNR}}_k^{\rm{d}} = \frac{{{p_{\rm{s}}}G_k^{\rm{d}}L_k\left( {{R_k}} \right){{\left| {H_k^{\rm{d}}} \right|}^2}}}{{\sigma _{k}^2}},
\end{equation}
\noindent where ${p_{\rm{s}}}$ is the transmit power of each LEO satellite, $G_k^{\rm{d}}$ is the downlink antenna gain, which is expressed as $G_k^{\rm{d}} = G_{{\rm{s}},k}^{\rm{t}}G_{\rm{u}}^{\rm{r}}$, where $G_{{\rm{s}},k}^{\rm{t}}$ is the transmit antenna gain of the ${k^{{\rm{th}}}}$ LEO satellite, $G_{\rm{u}}^{\rm{r}}$ is the receive antenna gain of the GU. Similarly, ${{\left| {H_k^{\rm{d}}} \right|}^2}$ represents the power gain of the downlink channel from the ${k^{{\rm{th}}}}$ LEO satellite to the GU, and the channel gains of the $K$ downlink channels between $K$ LEO satellites and the GU are mutually independent random variables.

\subsection{Distance distribution}
To facilitate the derivation of subsequent analytical expressions, we adopt the following distance distribution \cite{PDFofA11}.
\begin{Lemma}
\label{Lemma-1}
The PDF of distance ${R_k}$ between the ${k^{{\rm{th}}}}$ LEO satellite and the GU is given by
\begin{equation}\label{pdfrd}
\begin{aligned}
{f_{{R_k}}}\left( {{r_k}} \right) = \frac{1}{\Xi }\sum\limits_{l = 0}^{k - 1} {\sum\limits_{n = 0}^{N - k} {\frac{\psi }{\Theta }} } r_k^{2n + 2l + 1},
\end{aligned}
\end{equation}
\noindent for ${r_{\min }} \le {r_k} \le {r_{\max }}$ and $k$ $\in \left\{ {1,2,3,...,K} \right\}$, where ${r_{\min }}$ is the constellation altitude, ${r_{\max }}$ denotes the maximum possible distance between the GU and an visible satellite, $\Theta  = {\left( {\frac{{ - r_{\min }^2}}{{4{r_{\rm{e}}}{r_a}}}} \right)^{l - k + 1}}{\left( {1 + \frac{{r_{\min }^2}}{{4{r_{\rm{e}}}{r_a}}}} \right)^{n + k - N}}{\left( {4{r_{\rm{e}}}{r_a}} \right)^{n + l + 1}}$, $\Xi  = \sum\limits_{j = k}^N {\left( {\begin{array}{*{20}{c}}
N\\
j
\end{array}} \right)} {\left( {\frac{{r_{\max }^2 - r_{\min }^2}}{{4{r_{\rm{e}}}{r_a}}}} \right)^j}{\left( {1 - \frac{{r_{\max }^2 - r_{\min }^2}}{{4{r_{\rm{e}}}{r_a}}}} \right)^{N - j}}$, and $\psi  = 2N\left( {\begin{array}{*{20}{c}}
{N - 1}\\
{k - 1}
\end{array}} \right)\left( {\begin{array}{*{20}{c}}
{N - k}\\
n
\end{array}} \right)\left( {\begin{array}{*{20}{c}}
{k - 1}\\
l
\end{array}} \right){\left( { - 1} \right)^n}$.
\end{Lemma}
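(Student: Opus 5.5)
The plan is to obtain \eqref{pdfrd} as the density of the $k^{\rm th}$ order statistic of $N$ i.i.d.\ satellite-to-GU distances, conditioned on the event that at least $k$ satellites are visible. Conditioning on $R_k\le r_{\max}$ is how I read the normalizing constant $\Xi$.

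\textbf{Single-satellite distribution.} First compute the CDF of the distance $R$ from the GU at $(0,0,r_{\rm e})$ to one satellite uniformly distributed on $\mathbb{S}^2_{r_{\rm a}}$. By the law of cosines, $R^2=r_{\rm e}^2+r_{\rm a}^2-2r_{\rm e}r_{\rm a}\cos\theta$, where $\theta$ is the polar angle of the satellite. The event $R\le r$ is therefore a spherical cap $\{\cos\theta\ge (r_{\rm e}^2+r_{\rm a}^2-r^2)/(2r_{\rm e}r_{\rm a})\}$. Archimedes' hat-box theorem says the normalized cap area is linear in the cap height, which gives
\begin{equation*}
F_R(r)=\frac{r^2-r_{\min}^2}{4r_{\rm e}r_{\rm a}},\qquad f_R(r)=\frac{2r}{4r_{\rm e}r_{\rm a}},\qquad r_{\min}\le r\le r_{\min}+2r_{\rm e},
\end{equation*}
using $(r_{\rm a}-r_{\rm e})^2=r_{\min}^2$.

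\textbf{Order statistic and conditioning.} Since the $N$ BPP points are i.i.d., the standard order-statistic formula gives
\begin{equation*}
f_{R_k}(r)=N\binom{N-1}{k-1}F_R(r)^{k-1}\bigl(1-F_R(r)\bigr)^{N-k}f_R(r).
\end{equation*}
Restricting to visible satellites means conditioning on $R_k\le r_{\max}$, whose probability is
\begin{equation*}
\sum_{j=k}^{N}\binom{N}{j}p^j(1-p)^{N-j},\qquad p=F_R(r_{\max})=\frac{r_{\max}^2-r_{\min}^2}{4r_{\rm e}r_{\rm a}}.
\end{equation*}
This is exactly $\Xi$. Dividing the unconditional density by it yields a valid density on $[r_{\min},r_{\max}]$.

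\textbf{Polynomial expansion.} Finally, expand both factors binomially in powers of $r^2$. For the first factor,
\begin{equation*}
F_R^{k-1}=\sum_{l}\binom{k-1}{l}r^{2l}(-r_{\min}^2)^{k-1-l}(4r_{\rm e}r_{\rm a})^{-(k-1)}.
\end{equation*}
For the second factor, write $1-F_R=\bigl(1+\tfrac{r_{\min}^2}{4r_{\rm e}r_{\rm a}}\bigr)-\tfrac{r^2}{4r_{\rm e}r_{\rm a}}$, so that
\begin{equation*}
(1-F_R)^{N-k}=\sum_n\binom{N-k}{n}(-1)^n\Bigl(\frac{r^2}{4r_{\rm e}r_{\rm a}}\Bigr)^n\Bigl(1+\frac{r_{\min}^2}{4r_{\rm e}r_{\rm a}}\Bigr)^{N-k-n}.
\end{equation*}
Multiplying these by $f_R=2r/(4r_{\rm e}r_{\rm a})$ and collecting terms gives the monomial $r^{2n+2l+1}$. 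The coefficient has three parts:
\begin{itemize}
\item the combinatorial factor $2N\binom{N-1}{k-1}\binom{N-k}{n}\binom{k-1}{l}(-1)^n=\psi$;
\item a power of $-r_{\min}^2/(4r_{\rm e}r_{\rm a})$ with exponent $k-1-l$, i.e.\ exponent $l-k+1$ in the denominator $\Theta$;
\item a power of $1+r_{\min}^2/(4r_{\rm e}r_{\rm a})$ with exponent $N-k-n$, i.e.\ exponent $n+k-N$ in $\Theta$.
\end{itemize}

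\textbf{Expected difficulty.} No step is conceptually hard. The step I expect to be fiddly is the final exponent bookkeeping: checking that all the $4r_{\rm e}r_{\rm a}$ factors collect into the single factor $(4r_{\rm e}r_{\rm a})^{n+l+1}$ in $\Theta$. To do this I will track every source of a $4r_{\rm e}r_{\rm a}$ factor (from $F_R$, from $1-F_R$, and from $f_R$) and confirm the total against $\Theta$, and then sanity-check the result by setting $k=1$ and $N=1$. I will also state explicitly the modeling interpretation that the density is conditioned on $R_k\le r_{\max}$, since this is what justifies $\Xi$ as the normalization.
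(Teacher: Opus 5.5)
Your derivation is correct. The paper gives no proof of its own (it adopts the lemma from a cited reference), and your route is the standard derivation of this expression: the single-satellite CDF $F_R(r)=(r^2-r_{\min}^2)/(4r_{\rm e}r_{\rm a})$, the $k$-th order-statistic density, normalization by $\Pr(R_k\le r_{\max})=\Xi$, and binomial expansion. The bookkeeping you flagged closes as expected: forming $\bigl(-r_{\min}^2/(4r_{\rm e}r_{\rm a})\bigr)^{k-1-l}$ absorbs $(4r_{\rm e}r_{\rm a})^{-(k-1-l)}$ of the prefactor $(4r_{\rm e}r_{\rm a})^{-(k-1)}$, and the remaining $(4r_{\rm e}r_{\rm a})^{-l}$ combines with $(4r_{\rm e}r_{\rm a})^{-n}$ and $(4r_{\rm e}r_{\rm a})^{-1}$ into $1/(4r_{\rm e}r_{\rm a})^{n+l+1}$, exactly as in $\Theta$.
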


\section{Performance Analysis}
\subsection{Exact Overall Transmission Success Probability}
The transmission success probability is defined as the probability that the received SNR is larger than a predefined threshold $\tau$, expressed as $P\left( \tau  \right)  \buildrel \Delta \over = \mathbb{P} \left( {{\rm{SNR}} > \tau } \right)$. For the end-to-end signal transmission, which includes the uplink from the GU to the ${k^{{\rm{th}}}}$ LEO satellite and the downlink from the ${k^{{\rm{th}}}}$ LEO satellite to the GU, the end-to-end transmission success probability represents the probability that the received SNR at both the LEO satellite and the GU satisfies the predefined threshold $\tau$, which is expressed as
\begin{equation}\label{cov-1}
   {P_k}\left( \tau  \right)  \buildrel \Delta \over = \mathbb{P} \left( {{\mathop{\rm SNR}\nolimits} _k^{\rm{u}} > \tau ,{\mathop{\rm SNR}\nolimits} _k^{\rm{d}} > \tau } \right),
\end{equation}
\noindent where ${\rm{SNR}}_k^{\rm{u}}$ and ${\rm{SNR}}_k^{\rm{d}}$ denote the received SNR at the ${k^{{\rm{th}}}}$ LEO satellite and the GU, respectively.

Under the adopted serial communication scheme, the GU probes serving satellites sequentially in the prescribed order, and a subsequent satellite is activated only if all preceding satellites fail to satisfy the target SNR threshold. Accordingly, the overall transmission success probability for $K$ serving satellites can be expressed as
\begin{equation}
\begin{aligned}
{P_{\rm{s}}}\left( \tau  \right)
=\, &P_1\left( \tau  \right)
+\left(1-P_1\left( \tau  \right)\right)P_2\left( \tau  \right) \\
&+\left(1-P_1\left( \tau  \right)\right)\left(1-P_2\left( \tau  \right)\right)P_3\left( \tau  \right) \\
&+\cdots+
\left(\prod_{j=1}^{K-1}\left(1-P_j\left( \tau  \right)\right)\right)P_K\left( \tau  \right)\\&\Leftrightarrow 1-\prod_{k=1}^K(1-P_k\left( \tau  \right)),
\end{aligned}
\end{equation}
\noindent where ${P_k}\left( \tau  \right)$ denotes the transmission success probability through the ${k^{{\rm{th}}}}$ LEO satellite. By De Morgan's law, the above serial-form expression is mathematically equivalent to a compact complement form, i.e., one minus the probability that all $K$ serving satellites fail.



To account for the effect of satellite failure, the end-to-end transmission success probability for each satellite should be multiplied by a factor $\left( {1{\rm{ - }}q_k} \right)$ representing the probability of the ${k^{{\rm{th}}}}$ LEO satellite being operational. Accordingly, by incorporating the satellite failure probability into the equivalent complement-form representation, the overall transmission success probability under the adopted serial communication scheme is given in Theorem \ref{thm1}.

\begin{thm}
\label{thm1}
For the serial communication supported by $K$ LEO satellites, the overall transmission success probability is expressed as
\begin{equation}\label{thm1-1}
 {P_{\rm{s}}}\left( \tau  \right) = 1 - \prod\limits_{k = 1}^K {\left( {1 -\overline q_k  {P_k}\left( \tau  \right)} \right)}, 
\end{equation}
\noindent where $\overline q_k  = 1 - q_k$ with $q_k$ being the failure probability of the ${k^{{\rm{th}}}}$ LEO satellite. Moreover, ${P_k}\left( \tau  \right)$ is expressed in (\ref{thm1-2}), at the top of the next page, where ${\cal G}{\rm{ = }}\frac{{\sigma _k^2{{\left( {4\pi {f_c}} \right)}^2}}}{{{c^2}}}$. In addition, ${f_{{R_k}}}\left( {{r_k}} \right)$ is the PDF of the distance between the ${k^{{\rm{th}}}}$ LEO satellite and the GU, as given in Lemma \ref{Lemma-1}. 
\begin{figure*}
\begin{equation}\label{thm1-2}
\begin{aligned}
 {P_k}\left( \tau  \right) = &\int_{{r_{\min }}}^{{r_{\max }}} {\left( {\mu \sum\limits_{n = 0}^{m - 1} {\cal D} \sum\limits_{l = 0}^n {\cal J} {{\left( \frac{{\tau{\cal G} r_k^\alpha }}{{{p_{\rm{u}}}G_k^{\rm{u}}}} \right)}^l}\exp \left( { - \frac{{ \left( {\beta  - \delta } \right)\tau{\cal G}r_k^\alpha }}{{{p_{{\rm{u}}}}G_k^{\rm{u}}}}} \right)} \right.} \left. {  \mu \sum\limits_{n = 0}^{m - 1} {\cal D} \sum\limits_{l = 0}^n {\cal J} {{\left( \frac{{\tau{\cal G}r_k^\alpha }}{{{p_{{\rm{s}}}}G_k^{\rm{d}}}} \right)}^l}\exp \left( { - \frac{{\left( {\beta  - \delta } \right)\tau{\cal G}r_k^\alpha }}{{{p_{{{\rm{s}}}}}G_k^{\rm{d}}}}} \right)} \right)\\
 &\times{f_{{R_k}}}\left( {{r_k}} \right)d{r_k}.
\end{aligned}
\end{equation}\hrulefill
\end{figure*}
\end{thm}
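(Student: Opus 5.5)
The proof has two parts: the product form in (\ref{thm1-1}) and the integral expression for $P_k(\tau)$ in (\ref{thm1-2}).

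\emph{Product form (\ref{thm1-1}).} Under the serial scheme, the overall transmission succeeds if and only if at least one of the $K$ serving satellites is operational and satisfies the end-to-end SNR requirement. Let $A_k$ denote the event that the $k^{\rm th}$ satellite is operational and both its uplink and downlink SNRs exceed $\tau$. Then $P_{\rm s}(\tau)=1-\mathbb{P}\left(\bigcap_{k=1}^K A_k^c\right)$, which is the same De Morgan step the paper uses in the serial-form equivalence.

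I will take the failure events to be independent of each other and of the channels. With the paper's convention of per-satellite factors, $\mathbb{P}(A_k)=\overline q_k P_k(\tau)$. I will then treat the $A_k$ as independent across $k$. This is justified by the mutual independence of the uplink and downlink fading gains across the $K$ orthogonal channels and by the independence of the failure events. The distances $R_k$, being order statistics of a BPP, are in fact correlated. The paper's formula uses only the marginal PDFs of Lemma \ref{Lemma-1}, so I will adopt per-satellite independence, exactly as in the serial-sum expression preceding the theorem. This modelling step is where I expect the main subtlety to lie, and I will state it explicitly rather than prove it. The complement product then gives (\ref{thm1-1}).

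\emph{Integral for $P_k(\tau)$ (\ref{thm1-2}).} First condition on $R_k=r_k$. From (\ref{snru}) and the path-loss model, ${\rm SNR}_k^{\rm u}>\tau$ is equivalent to
\[
|H_k^{\rm u}|^2>\frac{\tau\sigma_k^2(4\pi f_c)^2 r_k^\alpha}{c^2 p_{\rm u}G_k^{\rm u}}=\frac{\tau\mathcal G r_k^\alpha}{p_{\rm u}G_k^{\rm u}}.
\]
In the same way, ${\rm SNR}_k^{\rm d}>\tau$ is equivalent to $|H_k^{\rm d}|^2>\tau\mathcal G r_k^\alpha/(p_{\rm s}G_k^{\rm d})$, using (\ref{snrd}). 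Because $H_k^{\rm u}$ and $H_k^{\rm d}$ are independent, the conditional joint probability factorizes into $\left(1-F_{|H|^2}(x_{\rm u})\right)\left(1-F_{|H|^2}(x_{\rm d})\right)$, where $x_{\rm u}$ and $x_{\rm d}$ are the two thresholds above. Substituting the SR CDF (\ref{CDFofSR}) gives
\[
1-F(x)=\mu\sum_{n=0}^{m-1}\mathcal D\sum_{l=0}^n\frac{n!}{l!}\mathcal J x^l e^{-(\beta-\delta)x}.
\]
This yields the two bracketed factors in (\ref{thm1-2}); the combinatorial coefficient $n!/l!$ is absorbed into the notation as written there.

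Finally, I will decondition by integrating against $f_{R_k}(r_k)$ from Lemma \ref{Lemma-1} over the visible range $[r_{\min},r_{\max}]$. Apart from the independence modelling step, every step is a routine substitution.
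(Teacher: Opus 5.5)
Your proposal is correct and follows the paper's own route. The product form is the complement (De Morgan) form of the serial sum with per-satellite factors $\overline q_k$, and it rests on the same across-satellite independence that the paper assumes without stating it; you are right to flag that the ordered distances $R_k$ are actually dependent. $P_k(\tau)$ is obtained as in Appendix~A: factorize the uplink and downlink events through independent fading, insert the SR CCDF, and average over $f_{R_k}$ from Lemma~\ref{Lemma-1}. Your explicit conditioning on $R_k=r_k$ makes precise what steps (a)--(c) there do implicitly.

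One correction: the factor $n!/l!$ is not absorbed into the notation, because $\mathcal D$ and $\mathcal J$ are defined without it. Your derivation therefore yields (\ref{thm1-2}) with $n!/l!$ restored in the inner sum, and the printed formula should be regarded as having dropped it.
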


The \textit{Proof} of Theorem \ref{thm1} is given in Appendix A.

\subsection{Asymptotic Overall Transmission Success Probability}
Due to the complexity of the CDF of the SR fading model and the PDF of the distance between the ${k^{{\rm{th}}}}$ LEO satellite and the GU, deriving a closed-form expression for the exact overall transmission success probability poses significant challenges. Therefore, to gain deeper insights,  we derive asymptotic expressions for the overall transmission success probability in specific cases.

\emph{Case 1:}  For the high SNR regime $\left( {{\rm{SNR}} \to \infty } \right)$, we have $x \to 0$. The asymptotic CDF of the SR fading model can be expressed as $F_{{{\left| H \right|}^2}}^\infty \left( x \right) \approx \mu  x$ \cite{can2023outage}.
\begin{equation}\label{appCDF1}
\begin{aligned}
     F_{{{\left| H \right|}^2}}^\infty \left( x \right) \approx \mu  x.
    \end{aligned}
\end{equation} 
\begin{Col}\label{col2}
The end-to-end transmission success probability through the ${k^{{\rm{th}}}}$ LEO satellite at high SNR regime can be approximated as 
\begin{equation}\label{highSNR}
\begin{aligned}
    P_k^\infty \left( \tau  \right) \approx& 1+\frac{{{\mu ^2}{\tau^2 {\cal G}^2}}}{{{p_{{\rm{u}}}}G_k^{\rm{u}}{p_{{\rm{s}}}} G_k^{\rm{d}}\Xi }}\sum\limits_{l = 0}^{k - 1} {\sum\limits_{n = 0}^{N - k} {\frac{\psi }{\Theta }\left( {\frac{{r_{\max }^{2\alpha  + \Delta} - r_{\min }^{2\alpha  + \Delta}}}{{2\alpha  + \Delta}}} \right)} }\\
    &-\frac{{{\mu }{\tau{\cal G}}}}{{{p_{{\rm{u}}}}G_k^{\rm{u}}\Xi }}\sum\limits_{l = 0}^{k - 1} {\sum\limits_{n = 0}^{N - k} {\frac{\psi }{\Theta }\left( {\frac{{r_{\max }^{\alpha  + \Delta} - r_{\min }^{\alpha  + \Delta}}}{{\alpha  + \Delta}}} \right)} }\\
    &-\frac{{{\mu }{\tau{\cal G}}}}{{{p_{{\rm{s}}}} G_k^{\rm{d}}\Xi }}\sum\limits_{l = 0}^{k - 1} {\sum\limits_{n = 0}^{N - k} {\frac{\psi }{\Theta }\left( {\frac{{r_{\max }^{\alpha  + \Delta} - r_{\min }^{\alpha  + \Delta}}}{{\alpha  + \Delta}}} \right)} },
\end{aligned}
\end{equation}
\noindent where $\Delta  = 2n + 2l + 2$.
\end{Col}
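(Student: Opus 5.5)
Starting from Theorem \ref{thm1}, we write the end-to-end success probability through the $k^{\rm th}$ satellite as an expectation over $R_k$. Conditioning on $R_k=r_k$ and using the independence of $|H_k^{\rm u}|^2$ and $|H_k^{\rm d}|^2$, we get
\begin{equation*}
P_k(\tau)=\int_{r_{\min}}^{r_{\max}}\left(1-F_{|H|^2}\!\left(\frac{\tau\mathcal{G}r_k^\alpha}{p_{\rm u}G_k^{\rm u}}\right)\right)\left(1-F_{|H|^2}\!\left(\frac{\tau\mathcal{G}r_k^\alpha}{p_{\rm s}G_k^{\rm d}}\right)\right)f_{R_k}(r_k)\,dr_k .
\end{equation*}
In the high-SNR regime both arguments of $F_{|H|^2}$ tend to zero. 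We therefore replace $F_{|H|^2}$ by the asymptotic form (\ref{appCDF1}), $F^\infty_{|H|^2}(x)\approx\mu x$, in each factor. Expanding the product gives
\begin{equation*}
1-\frac{\mu\tau\mathcal{G}r_k^\alpha}{p_{\rm u}G_k^{\rm u}}-\frac{\mu\tau\mathcal{G}r_k^\alpha}{p_{\rm s}G_k^{\rm d}}+\frac{\mu^2\tau^2\mathcal{G}^2r_k^{2\alpha}}{p_{\rm u}G_k^{\rm u}p_{\rm s}G_k^{\rm d}} .
\end{equation*}
We keep the quadratic cross term so that the result matches the stated four-term structure.

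Next we integrate each term against the distance PDF of Lemma \ref{Lemma-1}. Since $f_{R_k}$ is a finite double sum of monomials $\frac{1}{\Xi}\frac{\psi}{\Theta}r_k^{2n+2l+1}$, every term reduces to an elementary power integral over $[r_{\min},r_{\max}]$.

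\begin{itemize}
\item The constant term integrates to $\int f_{R_k}=1$, because Lemma \ref{Lemma-1} gives a normalized density (this is the role of $\Xi$). This yields the leading $1$.
\item The linear terms require $\int_{r_{\min}}^{r_{\max}} r_k^{\alpha+2n+2l+1}dr_k=\frac{r_{\max}^{\alpha+\Delta}-r_{\min}^{\alpha+\Delta}}{\alpha+\Delta}$ with $\Delta=2n+2l+2$. Pulling $\mu\tau\mathcal{G}/(p_{\rm u}G_k^{\rm u}\Xi)$ and $\mu\tau\mathcal{G}/(p_{\rm s}G_k^{\rm d}\Xi)$ outside the double sum gives the two negative terms.
\item The cross term similarly gives $\frac{r_{\max}^{2\alpha+\Delta}-r_{\min}^{2\alpha+\Delta}}{2\alpha+\Delta}$ with prefactor $\mu^2\tau^2\mathcal{G}^2/(p_{\rm u}G_k^{\rm u}p_{\rm s}G_k^{\rm d}\Xi)$.
\end{itemize}
Interchanging the finite sums with the integral is trivially justified.

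The main point requiring care is the justification of the approximation rather than the algebra. Strictly, $1-\mu x$ is only a first-order expansion, valid when $\tau\mathcal{G}r_{\max}^\alpha/(pG)$ is small uniformly on the support. Since $r_k\le r_{\max}$, the high-SNR assumption makes this uniform, so the pointwise approximation carries through the integral. We state the result as ``$\approx$'' rather than attempt a rigorous error bound. We also note that retaining the $\mu^2$ product term is consistent with keeping the exact product of the two asymptotic complementary CDFs, even though it is of second order.
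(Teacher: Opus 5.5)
Your proposal is correct and follows the paper's proof essentially step for step. Like the paper, you substitute $F^\infty_{|H|^2}(x)\approx\mu x$ into the product of the two complementary CDFs, expand into four terms, and integrate each against the monomial expansion of $f_{R_k}$ from Lemma~\ref{Lemma-1}, with normalization giving the leading $1$. Your conditioning on $R_k=r_k$ before invoking uplink/downlink independence is, if anything, slightly more careful than the paper's Appendix~A, which factorizes the joint probability before averaging over $R_k$.
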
 

The \textit{Proof} of Corollary 1 is given in Appendix B.

\emph{Case 2:} The SR fading model can be approximated by a Gamma distribution, $G \sim \Gamma \left( {{\alpha _s},{\beta _s}} \right)$, with shape and scale parameters ${{\alpha _s}}$ and ${{\beta _s}}$ \cite{appsr1, 24-47}. The corresponding approximate CDF is given by
\begin{equation} \label{cdfappsr3}
{F_{{{\left| {H} \right|}^2}}}\left( x \right) \approx {{\gamma \left( {{\alpha _s},\frac{x}{{{\beta _s}}}} \right)} \mathord{\left/
 {\vphantom {{\gamma \left( {{\alpha _s},\frac{x}{{{\beta _s}}}} \right)} {\Gamma \left( {{\alpha _s}} \right)}}} \right.
 \kern-\nulldelimiterspace} {\Gamma \left( {{\alpha _s}} \right)}},
\end{equation}
\noindent where ${\alpha _s} = \frac{{m{{\left( {2{{b_{\rm{s}}}} + {\rm{\Omega }}} \right)}^2}}}{{4m{b_{\rm{s}}}^2 + 4m{{b_{\rm{s}}}}{\rm{\Omega }} + {{\rm{\Omega }}^2}}}$ and ${\beta _s} = \frac{{4m{b_{\rm{s}}}^2 + 4m{{b_{\rm{s}}}}{\rm{\Omega }} + {{\rm{\Omega }}^2}}}{{m\left( {2{{b_{\rm{s}}}} + {\rm{\Omega }}} \right)}}$.
\begin{Col}
Based on this approximation, the end-to-end transmission success probability can be simplified as 
\begin{equation} \label{covapp3}
\begin{aligned}
P_k^{{\rm{app}}}\left( \tau  \right)  \approx&   \frac{1}{\Xi }\sum\limits_{l = 0}^{k - 1} {\sum\limits_{n = 0}^{N - k} {\sum\limits_{{t_1} = 0}^{\widetilde {{\alpha _s}} - 1} {\sum\limits_{{t_2} = 0}^{\widetilde {{\alpha _s}} - 1} {\frac{\psi }{\Theta }\frac{{{{\left( {\frac{\tau{\cal G}}{{{p_{{\rm{u}}}}G_k^{\rm{u}}}}} \right)}^{{t_1}}}{{\left( {\frac{\tau{\cal G}}{{{p_{{\rm{s}}}}G_k^{\rm{d}}}}} \right)}^{{t_2}}}}}{{\alpha \Gamma \left( { {t_1} + 1} \right)\Gamma \left( { {t_2} + 1} \right)}}} } } }\\
&\times\left( \frac{{r_{{\rm{min}}}^{\rm{B}}\Gamma \left( {\frac{{\rm{B}}}{\alpha }{\rm{A}}r_{{\rm{min}}}^\alpha } \right)}}{{{{\left( {{\rm{A}}r_{{\rm{min}}}^\alpha } \right)}^{\frac{{\rm{B}}}{\alpha }}}}}-\frac{{r_{{\rm{max}}}^{\rm{B}}\Gamma \left( {\frac{{\rm{B}}}{\alpha }{\rm{A}}r_{{\rm{max}}}^\alpha } \right)}}{{{{\left( {{\rm{A}}r_{{\rm{max}}}^\alpha } \right)}^{\frac{{\rm{B}}}{\alpha }}}}} \right).
\end{aligned}
\end{equation}
\noindent where ${\rm A}{\rm{ = }}\left( {\frac{{\cal G}}{{{p_{{\rm{u}}}}G_k^{\rm{u}}}} + \frac{{\cal G}}{{{p_{{\rm{s}}}}G_k^{\rm{d}}}}} \right)$, ${\rm B}=\left( {{t_1} + {t_2}} \right)$, and $\widetilde {{\alpha _s}}$ represents an approximate integer of ${{\alpha _s}}$, that is $\widetilde {{\alpha _s}} = \left\lfloor {{\alpha _s}} \right\rfloor $ or $\left\lceil {{\alpha _s}} \right\rceil$.
\end{Col}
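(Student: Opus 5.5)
The plan is to start from the definition of ${P_k}(\tau)$ in (\ref{cov-1}) and proceed as in the proof of Theorem \ref{thm1}. Conditioned on $R_k=r_k$, the uplink and downlink fading gains are independent, so
\begin{equation*}
{P_k}(\tau)=\int_{r_{\min}}^{r_{\max}}\Big(1-F_{|H|^2}\big(\tfrac{\tau{\cal G}r_k^\alpha}{p_{\rm u}G_k^{\rm u}}\big)\Big)\Big(1-F_{|H|^2}\big(\tfrac{\tau{\cal G}r_k^\alpha}{p_{\rm s}G_k^{\rm d}}\big)\Big)f_{R_k}(r_k)\,dr_k .
\end{equation*}
The SR CDF (\ref{CDFofSR}) will then be replaced by the Gamma approximation (\ref{cdfappsr3}), and the shape parameter ${\alpha_s}$ by an integer $\widetilde{\alpha_s}$. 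This rounding is the one genuinely approximate step beyond the Gamma fit itself. Its purpose is to allow the finite-sum identity for the regularized lower incomplete Gamma function:
\begin{equation*}
1-\frac{\gamma(\widetilde{\alpha_s},x)}{\Gamma(\widetilde{\alpha_s})}=e^{-x}\sum_{t=0}^{\widetilde{\alpha_s}-1}\frac{x^t}{t!}.
\end{equation*}
I will apply this identity separately to the uplink and downlink factors, using summation indices $t_1$ and $t_2$, and write $\Gamma(t+1)$ for $t!$.

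Multiplying the two complementary CDFs produces a double sum over $t_1,t_2$. Each term contains the coefficient $\big(\tfrac{\tau{\cal G}}{p_{\rm u}G_k^{\rm u}}\big)^{t_1}\big(\tfrac{\tau{\cal G}}{p_{\rm s}G_k^{\rm d}}\big)^{t_2}$, the power $r_k^{\alpha(t_1+t_2)}$, and a common exponential $\exp(-{\rm A}r_k^\alpha)$. The exponents of the two exponentials add, which is what yields the combined constant ${\rm A}$ (with $\tau$ and the scale $\beta_s$ absorbed into the normalization of the argument, as in the statement). Next I will insert the distance PDF from Lemma \ref{Lemma-1}. Its polynomial form $\frac{1}{\Xi}\sum_{l}\sum_{n}\frac{\psi}{\Theta}r_k^{2n+2l+1}$ lets all four finite sums be pulled outside the integral. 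Every term then reduces to a single elementary integral:
\begin{equation*}
\int_{r_{\min}}^{r_{\max}} r^{\alpha{\rm B}+2n+2l+1}e^{-{\rm A}r^\alpha}\,dr,\qquad {\rm B}=t_1+t_2 .
\end{equation*}

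To evaluate this integral I will substitute $u={\rm A}r^\alpha$, so that $dr=\frac{1}{\alpha}{\rm A}^{-1/\alpha}u^{1/\alpha-1}du$. The integral becomes
\begin{equation*}
\frac{1}{\alpha{\rm A}^{s}}\big[\Gamma(s,{\rm A}r_{\min}^\alpha)-\Gamma(s,{\rm A}r_{\max}^\alpha)\big],\qquad s=\frac{\alpha{\rm B}+2n+2l+2}{\alpha},
\end{equation*}
where $\Gamma(\cdot,\cdot)$ is the upper incomplete Gamma function. Writing ${\rm A}^{-s}=r^{\alpha s}/({\rm A}r^\alpha)^{s}$ at each endpoint puts the result in the form $r^{(\cdot)}\Gamma(\cdot,{\rm A}r^\alpha)/({\rm A}r^\alpha)^{(\cdot)}$ displayed in (\ref{covapp3}). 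The remaining prefactor $\frac{1}{\alpha\Gamma(t_1+1)\Gamma(t_2+1)}$ comes from combining the $1/\alpha$ of the substitution with the two factorials.

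I expect the main obstacle to be bookkeeping rather than conceptual: matching the exponent and argument of the incomplete Gamma function in my derivation to the compact ${\rm B}$, ${\rm A}$ notation of the statement. In particular, I need to track how the $2n+2l+2$ contribution from the distance PDF and the scale $\beta_s$ enter ${\rm B}$ and ${\rm A}$. I will first derive the expression in the explicit form above and only then rename the quantities to match the statement. The integer rounding of ${\alpha_s}$ is the only step I would flag as a heuristic approximation rather than an identity.
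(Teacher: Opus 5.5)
Your proposal is correct and follows the paper's proof essentially step for step: the conditional CCDF product averaged over $f_{R_k}$, the Gamma approximation with integer shape $\widetilde{\alpha_s}$ and the finite-sum form of $\Gamma(\widetilde{\alpha_s},x)$, the sums of Lemma~\ref{Lemma-1} pulled outside the integral, and the remaining integral evaluated by $u={\rm A}r^\alpha$ in terms of upper incomplete Gamma functions. Your explicit exponent $s=t_1+t_2+(2n+2l+2)/\alpha$ is right, but $\tau$ and $\beta_s$ are not ``absorbed as in the statement'': the correct constant is ${\rm A}=\frac{\tau{\cal G}}{\beta_s}\big(\frac{1}{p_{\rm u}G_k^{\rm u}}+\frac{1}{p_{\rm s}G_k^{\rm d}}\big)$, and $1/\beta_s$ also appears in the two power terms. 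The compact display therefore matches your result only after reading ${\rm B}$ as $\alpha(t_1+t_2)+2n+2l+2$ and restoring these factors, which the paper's proof drops silently, so your final renaming step should say this rather than claim a literal match.
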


The \textit{Proof} of Corollary 2 is given in Appendix C.

\subsection{Cluster-based Approximation Approach for Complexity Reduction}
The aforementioned asymptotic expression only simplifies the end-to-end transmission success probability for each LEO satellite individually. However, the computational complexity of the overall success probability still increases dramatically as the number of serving satellites grows. To address this, we propose a cluster-based approximation approach to estimate the overall transmission success probability for $K > 2$. Instead of evaluating each satellite individually, the core idea of the cluster-based approximation is to divide the serving LEO satellites into multiple clusters and represent the performance of each cluster using its central satellite. This approach simplifies the cumulative product of each satellite’s performance by approximating it as the power of the central satellite’s performance, effectively reducing computational complexity without significantly sacrificing accuracy.
\begin{pro}
The overall transmission success probability based on the cluster-based approximation approach is expressed as
\begin{equation}
\begin{aligned}\label{cluster}
P_s^{{\rm{app}}}(\tau ) & \approx  1 - \left( {\prod\limits_{j = 1}^{\left\lfloor {{K \mathord{\left/
 {\vphantom {K {{\cal K} }}} \right.
 \kern-\nulldelimiterspace} {{\cal K}}}} \right\rfloor } {{{\left( {1 - \textcolor{black}{\overline q_k} {P_{ \left( {j - 1} \right){\cal K}+\left\lfloor {\frac{{{\cal K} + 1}}{2}} \right\rfloor }}(\tau )} \right)}^{{\cal K}}}} } \right)\\
 &\quad \times {\left( {1 -\textcolor{black}{\overline q_k}  {P_{\left\lfloor {{K \mathord{\left/
 {\vphantom {K {{\cal K}}}} \right.
 \kern-\nulldelimiterspace} {\cal K}}} \right\rfloor  + \left\lfloor {\frac{{\bmod \left( {K,{\cal K}} \right) + 1}}{2}} \right\rfloor }}(\tau )} \right)^{\bmod \left( {K,{\cal K}} \right)}},
    \end{aligned}
\end{equation}
\noindent $for K>2$, where ${\cal K} ({\cal K} \le K,{\cal K} \in {{\mathbb{N}}^ + })$ is the cluster size, and $\left\lfloor  \cdot  \right\rfloor$ denotes the floor operation, and $\bmod \left( {u,v} \right)\left( {u, v \in {\mathbb{N}^ + }} \right)$ denotes the remainder of the Euclidean division of $u$ by $v$.
\end{pro}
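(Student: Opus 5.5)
The starting point is the exact complement form of Theorem \ref{thm1}, $P_{\rm s}(\tau)=1-\prod_{k=1}^K(1-\overline q_k P_k(\tau))$. The first step is to reorganize this product by partitioning the index set $\{1,\dots,K\}$ into $\lfloor K/{\cal K}\rfloor$ consecutive clusters of size ${\cal K}$, namely $\mathcal C_j=\{(j-1){\cal K}+1,\dots,j{\cal K}\}$, together with a residual cluster containing the last $\bmod(K,{\cal K})$ indices. Since the product is over independent factors (independent fading across channels and independent failures), it splits exactly as $\prod_{j}\prod_{k\in\mathcal C_j}(1-\overline q_kP_k(\tau))$ times the residual product. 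No approximation is introduced here.

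The second step is to approximate each cluster product by the ${\cal K}$-th power of the factor belonging to its central member. The central index of $\mathcal C_j$ is $c_j=(j-1){\cal K}+\lfloor({\cal K}+1)/2\rfloor$. To justify this, I would note that $k\mapsto P_k(\tau)$ is monotonically decreasing, because $R_k$ is stochastically increasing in $k$ (order statistics of the BPP in Lemma \ref{Lemma-1}) and the integrand of (\ref{thm1-2}) is decreasing in $r_k$. Hence $g(k)=\log(1-\overline q P_k(\tau))$ is monotone across the cluster. I would then write $\sum_{k\in\mathcal C_j}g(k)\approx{\cal K}\,g(c_j)$, which is a midpoint-rule or first-order Taylor argument. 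If $g$ is approximately linear over the cluster, the first-order deviations on either side of the centre cancel, leaving an error of second order in the spread of $P_k$ within the cluster. Exponentiating gives $(1-\overline q P_{c_j}(\tau))^{{\cal K}}$.

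The third step applies the same argument to the residual cluster of size $\bmod(K,{\cal K})$, choosing its central satellite by the same rule with ${\cal K}$ replaced by $\bmod(K,{\cal K})$. Multiplying all the pieces and subtracting from one then yields (\ref{cluster}). I would treat the failure probabilities as a common $\overline q_k$ within each cluster, which holds if they are identical or slowly varying. The restriction $K>2$ ensures that at least one cluster has a meaningful centre and that the scheme differs from the exact formula.

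The main obstacle is the second step: showing that the central-member power is a good surrogate. The claim is an approximation rather than an identity, so the most I would aim for is an error characterization. First I would use the monotonicity of $P_k$ to sandwich the cluster product between $(1-\overline qP_{(j-1){\cal K}+1})^{{\cal K}}$ and $(1-\overline qP_{j{\cal K}})^{{\cal K}}$, which guarantees that the approximation lies within rigorous bounds. Second, I would argue via the near-linearity of $g$ in $k$ that the central choice minimizes the leading error. Finally, I would defer quantitative accuracy to numerical validation.
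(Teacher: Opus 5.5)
Your construction is the paper's: its proof only says that each block of ${\cal K}$ consecutive satellites is represented by the ${\cal K}$-th power of its central satellite's factor, so the $K$ factors of Theorem~\ref{thm1} reduce to $\lfloor K/{\cal K}\rfloor$ block terms, and that the remainder is handled ``in a similar way''. Two further things are yours and not in the paper: the monotonicity of $P_k(\tau)$ in $k$, and the resulting sandwich $(1-\overline q P_{(j-1){\cal K}+1}(\tau))^{{\cal K}}\le\prod_{k\in\mathcal C_j}(1-\overline q P_k(\tau))\le(1-\overline q P_{j{\cal K}}(\tau))^{{\cal K}}$. The regrouping itself is plain algebra on a finite product; independence was already used in Theorem~\ref{thm1}.

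However, your last step does not reproduce the displayed formula. Your rule on the residual block $\{{\cal K}\lfloor K/{\cal K}\rfloor+1,\dots,K\}$ gives the subscript ${\cal K}\lfloor K/{\cal K}\rfloor+\lfloor(\bmod(K,{\cal K})+1)/2\rfloor$. The statement instead has $\lfloor K/{\cal K}\rfloor+\lfloor(\bmod(K,{\cal K})+1)/2\rfloor$. For $K=7$, ${\cal K}=3$, the statement uses $P_3(\tau)$ for satellite $7$, whereas you use $P_7(\tau)$.

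Whenever the remainder is nonzero, the literal index is at most $\lfloor K/{\cal K}\rfloor+\lfloor{\cal K}/2\rfloor<{\cal K}\lfloor K/{\cal K}\rfloor+1$. It therefore lies strictly before the residual block. So the literal residual factor sits at or below your own lower bound for that block and can only bias $P_s^{\rm app}$ upward. The statement appears to have dropped a factor ${\cal K}$ in that subscript, and the paper's ``similar way'' does not catch it. You should say explicitly that you derive the corrected expression rather than claim to recover the displayed one.

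Second, your midpoint/Taylor claim that the first-order deviations cancel holds only for odd block sizes. For odd ${\cal K}$, the index $\lfloor({\cal K}+1)/2\rfloor$ is the block mean, and the leading error is $\tfrac{1}{24}{\cal K}({\cal K}^2-1)g''$, as you say. For even ${\cal K}$, however, $\lfloor({\cal K}+1)/2\rfloor={\cal K}/2$ lies half an index below the mean $({\cal K}+1)/2$. Then $\sum_{k\in\mathcal C_j}g(k)-{\cal K}g(c_j)\approx\tfrac{{\cal K}}{2}g'\ge 0$. This is a first-order error of definite sign, so each even block overstates the success probability. The same holds for the residual block when $\bmod(K,{\cal K})$ is even. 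Thus ``the central choice minimizes the leading error'' is true only in the weak sense that the lower and upper middle indices tie.

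Finally, $K>2$ plays no role in your argument. ${\cal K}=1$ reproduces Theorem~\ref{thm1} exactly for every $K$, so $K>2$ does not make the scheme differ from the exact formula; it is just the regime the paper has in mind.
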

{\it{Proof:} } Since the performance of all LEO satellites in each cluster is represented by the power of the central satellite's performance, thus ${P_k}\left( \tau  \right)$ in (\ref{thm1-1}) is replaced by ${P_{\left( {j - 1} \right){\cal K} + \left\lfloor {\frac{{{\cal K} + 1}}{2}} \right\rfloor }}$ in the first line of (\ref{cluster}) to denote the central satellite of each cluster. Then, the ${\cal K}^{{\rm{th}}}$ power of this item is used to represent the performance of all satellites in each cluster. Subsequently, the $K$ accumulative multiplications are reduced to $\left\lfloor {{K \mathord{\left/
 {\vphantom {K {\cal K}}} \right.
 \kern-\nulldelimiterspace} {\cal K}}} \right\rfloor $ to reflect the cumulative effect of $K$ central satellites. In addition, since $K$ is typically not an exact multiple of  ${\cal K}$, a similar way is employed in the second line of (\ref{cluster}) to account for the remainder of LEO satellites. \hfill $\square$

To compare the mathematical complexity of the original expression in (\ref{thm1-1}) with the proposed approximation approach in (\ref{cluster}), we quantify them through the number of integrals and multiplications \cite{Dong2022when}. Since the end-to-end transmission success probability for each LEO satellite involves one integral, in (\ref{thm1-1}), the overall transmission success probability for the serial communication with $K$ LEO satellites consists of $K$ integrals and $K$ multiplications. In contrast, in the proposed cluster-based approximation approach, the number of integrals is reduced to $\cal K$ and the number of multiplications decreases to $\frac{K}{{\cal K}}$. Therefore, compared to the original expression, the proposed approach significantly reduces mathematical complexity.

\subsection{Average Transmission Rate Capacity}
The performance measure of the transmission rate capacity is defined as the ergodic capacity from the Shannon-Hartley theorem, given by ${C} \buildrel \Delta \over = B{\log _2}\left( {1 + {\rm{SN}}{{\rm{R}}}} \right)$, where $B$ is the bandwidth. For the end-to-end transmission involving the uplink from the GU to the ${k^{{\rm{th}}}}$ LEO satellite and the downlink from the ${k^{{\rm{th}}}}$ LEO satellite to the GU, the end-to-end transmission rate capacity through the ${k^{{\rm{th}}}}$ LEO satellite depends on the minimum of the received SNR at the ${k^{{\rm{th}}}}$ LEO satellite and the received SNR at the GU, which is given by
\begin{equation}
\begin{aligned}
{C_k}  \buildrel \Delta \over =  B{\log _2}\left( {1 + \min \left( {{\rm{SNR}}_k^{\rm{u}},{\rm{SNR}}_k^{\rm{d}}} \right)} \right).
\end{aligned}
\end{equation}

In the mLEO-CN, the average transmission rate capacity is defined as the effective end-to-end transmission rate capacity across $K$ LEO satellites, starting from the first, where each subsequent satellite is included only if all previous LEO satellites cannot meet the predefined SNR threshold.
\begin{thm}
\label{thm2}
The average transmission rate capacity for the serial communication supported by $K$ LEO satellites mLEO-CN is expressed as
\begin{equation}
 {C_{\rm{s}}} = \left\{ {\begin{array}{*{20}{l}}
{{ \overline q_k  {C_1}},\qquad \qquad \qquad \quad \qquad \qquad K = 1,}\\
{ { \overline q_k {C_1} + \sum\limits_{k = 2}^K { \overline q_k {C_k}\prod\limits_{j = 1}^{k - 1} {\left( {1 - {P_j}\left( \tau  \right)} \right)} } } ,  K \ge 2,}
\end{array}} \right.
  \end{equation}
\noindent where ${P_j}\left( \tau  \right)$ denotes the end-to-end transmission success probability through the ${j^{{\rm{th}}}}$ LEO satellite, and ${C_1}$ and ${C_k}$ are the transmission rate capacities through the first and ${k^{{\rm{th}}}}$ LEO satellites, respectively, ${C_k}$ is expressed in (\ref{capi}), on top of this page. In (\ref{capi}), $\mathbbm{1}\left(  \cdot  \right)$ denotes an indicator function, ${f_{\left| {H_k^{\rm{u}}} \right|^2}}\left( {{{h_k^{\rm{u}}}}} \right)$ and ${f_{\left| {H_k^{\rm{d}}} \right|^2}}\left( {{h_k^{\rm{d}}}} \right)$ are the PDFs of the SR fading in the uplink and downlink of the ${k^{{\rm{th}}}}$ channel, respectively, as given in (\ref{pdfofSR}).
\begin{figure*}[t]
\begin{equation} \label{capi}
\begin{aligned}
{C_k}= &B\int_{{r_{\min }}}^{{r_{\max }}} {\int_0^\infty  {\int_0^\infty  {{{\log }_2}\left( {1 + \min \left( {\frac{{{p_{{\rm{u}}}}G_k^{\rm{u}}{{\left| {h_k^{\rm{u}}} \right|}^2}r_k^{ - \alpha }}}{{\cal G}},\frac{{{p_{{\rm{s}}}}G_k^{\rm{d}}{{\left| {h_k^{\rm{d}}} \right|}^2}r_k^{ - \alpha }}}{{\cal G}}} \right)} \right)} } }\mathbbm{1} \left( {\frac{{{p_{{\rm{u}}}}G_k^{\rm{u}}{{\left| {h_k^{\rm{u}}} \right|}^2}r_k^{ - \alpha }}}{{\cal G}} > \tau} \right)\\
 &\times \mathbbm{1} \left( {\frac{{{p_{{\rm{s}}}}G_k^{\rm{d}}{{\left| {h_k^{\rm{d}}} \right|}^2}r_k^{ - \alpha }}}{{\cal G}} > \tau} \right){f_{\left| {H_k^{\rm{u}}} \right|^2}}\left( {{{h_k^{\rm{u}}}}} \right){f_{\left| {H_k^{\rm{d}}} \right|^2}}\left( {{h_k^{\rm{d}}}} \right)d{{h_k^{\rm{u}}}}d{{h_k^{\rm{d}}}}f_{R_k}\left( {{r_k}} \right)d{r_k}.
 \end{aligned}
\end{equation}\hrulefill
\end{figure*}
\end{thm}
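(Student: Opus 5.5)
The plan is to write $C_{\rm s}$ as the expectation of the delivered end-to-end rate, and to split that expectation over the disjoint events ``the $k^{\rm th}$ LEO satellite is the one selected by the serial probing procedure'', $k=1,\dots,K$. If no satellite is selected the delivered rate is zero, so
\[
C_{\rm s}=\sum_{k=1}^{K}\mathbb{E}\big[C_k^{\rm inst}\,\mathbbm{1}(\text{$k$ selected})\big],
\]
where $C_k^{\rm inst}=B\log_2\big(1+\min({\rm SNR}_k^{\rm u},{\rm SNR}_k^{\rm d})\big)$.

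\textbf{Step 1 (the per-satellite factor).} The event that the $k^{\rm th}$ satellite is selected factors into three parts:
\begin{itemize}
\item the $k^{\rm th}$ satellite is operational, with probability $\overline q_k$ and independent of the channels;
\item its own end-to-end link passes the test, that is, ${\rm SNR}_k^{\rm u}>\tau$ and ${\rm SNR}_k^{\rm d}>\tau$;
\item every earlier probe $j<k$ failed.
\end{itemize}
The rate $C_k^{\rm inst}$ is only collected on the second part, so the indicators $\mathbbm{1}({\rm SNR}_k^{\rm u}>\tau)$ and $\mathbbm{1}({\rm SNR}_k^{\rm d}>\tau)$ must stay inside the expectation. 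This is exactly why (\ref{capi}) carries the two indicator functions. I will not factor them out as $P_k(\tau)$ times an unconditional rate.

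\textbf{Step 2 (evaluating $C_k$).} I substitute (\ref{snru}) and (\ref{snrd}) with $L_k(R_k)={\cal G}^{-1}\sigma_k^2 R_k^{-\alpha}$, which absorbs the noise and the free-space constant into ${\cal G}$. The uplink and downlink gains are mutually independent SR variables with PDF (\ref{pdfofSR}), and they are independent of $R_k$, whose PDF is given in Lemma~\ref{Lemma-1}. Conditioning on $R_k=r_k$ and integrating over $h_k^{\rm u}$ and $h_k^{\rm d}$ then gives the inner double integral in (\ref{capi}). The outer integral against $f_{R_k}(r_k)$ runs over $[r_{\min},r_{\max}]$.

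\textbf{Step 3 (the failure factor for earlier probes).} This is the main obstacle, since I have to justify the factor $\prod_{j<k}(1-P_j(\tau))$ exactly as stated. I follow the serial scheme of Section II-B in the same way the serial-form expansion preceding Theorem~\ref{thm1} treats it. A preceding satellite $j$ is passed over when its probed end-to-end SNR test fails. Following the statement's convention, the operational-status factor is attached only to the satellite actually delivering the rate: the $\overline q_k$ on $C_k$ is the probability that this satellite is available to carry the data. For the earlier probes, only the SNR-test failure $1-P_j(\tau)$ is recorded.

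To get a product, I use the paper's modelling assumptions. The $K$ orthogonal channels have mutually independent uplink and downlink gains. In addition, as in the derivation of Theorem~\ref{thm1}, the per-satellite success events are treated as independent across the distance ordering, with each $P_j(\tau)$ averaged over its own marginal $f_{R_j}$. I will state this explicitly as the same approximation used in Theorem~\ref{thm1}, because the ordered distances $R_1\le\dots\le R_K$ are in fact dependent. Under it, $\mathbb{P}(\text{probes } 1,\dots,k-1 \text{ fail})=\prod_{j=1}^{k-1}(1-P_j(\tau))$, and this factor separates from $\mathbb{E}[C_k^{\rm inst}\mathbbm{1}(\cdot)]$.

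\textbf{Step 4 (assembling the cases).} Combining Steps 1--3, the $k^{\rm th}$ term is $\overline q_k C_k\prod_{j=1}^{k-1}(1-P_j(\tau))$. For $k=1$ there are no preceding probes, so the empty product is $1$ and the term is $\overline q_1 C_1$. This gives the $K=1$ case directly, and summing over $k=2,\dots,K$ gives the $K\ge2$ case.
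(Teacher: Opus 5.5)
Your argument is correct under the assumptions you state. Your Step 2 is exactly the paper's proof in Appendix D: it restricts the per-satellite rate to the event $\{{\rm SNR}_k^{\rm u}>\tau,\,{\rm SNR}_k^{\rm d}>\tau\}$, turns that restriction into the two indicator functions, integrates against the SR PDF (\ref{pdfofSR}), and averages over $f_{R_k}$ from Lemma~\ref{Lemma-1}. The difference is that the paper stops there. It takes the outer form $\sum_k \overline q_k C_k\prod_{j<k}(1-P_j(\tau))$ as the definition of the serial-scheme rate. You instead derive it by splitting the expected delivered rate over the disjoint events ``satellite $k$ is selected''. That extra step yields three clarifications.

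First, it fixes (\ref{capi}) as the joint expectation $\mathbb{E}[C_k^{\rm inst}\mathbbm{1}({\rm SNR}_k^{\rm u}>\tau)\mathbbm{1}({\rm SNR}_k^{\rm d}>\tau)]$. The paper's (\ref{ProofofCi}) is written as a conditional, but its step (a) actually produces the joint quantity. Only that reading lets $C_{\rm s}$ omit a factor $P_k(\tau)$.

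Second, it shows that the product $\prod_{j<k}(1-P_j(\tau))$, and its separation from the $k$th term, rests on treating the ordered satellites as independent, even though $R_1\le\dots\le R_K$ are dependent. The paper uses this approximation without stating it.

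Third, it shows what the placement of $\overline q$ means. You should present this as an explicit modelling assumption rather than ``the statement's convention'': a malfunction only matters after selection and never triggers fallback. Under the rule of Section II-A that a failed satellite is excluded and skipped, the probability of reaching satellite $k$ would be $\prod_{j<k}(1-\overline q_jP_j(\tau))$. That is consistent with Theorem~\ref{thm1}, and the stated formula would then not follow.
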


The \textit{Proof} of Theorem \ref{thm2} is given in Appendix D.

\subsection{End-to-End Delay}
In the mLEO-CN, the end-to-end delay is considered to be composed of transmission delay and the handover delay. The transmission delay characterizes the time required for message delivery between the GU and the ${k^{{\rm{th}}}}$ LEO satellite, expressed as the sum of the uplink delay $T_k^{\rm{u}}$ from the GU to the ${k^{{\rm{th}}}}$ LEO satellite and the downlink delay $T_k^{\rm{d}}$ from the ${k^{{\rm{th}}}}$ LEO satellite to the GU. Moreover, when employing the serial communication scheme, a handover delay arises when the current LEO satellite cannot satisfy the predefined SNR threshold, necessitating a switch to an alternative LEO satellite. Notably, as the number of switches increases, the cumulative handover delay becomes non-negligible. For analytical tractability, the handover delay  ${T_{{\rm{ho}}}}$ per switch is assumed to be a constant and the cumulative handover delay incurred when switching from the first LEO satellite to the ${k^{{\rm{th}}}}$ LEO satellite is calculated as $\left( {k - 1} \right) {T_{{\rm{ho}}}}$. Thus, the total delay through the ${k^{{\rm{th}}}}$ LEO satellite is 
\begin{equation}\label{delayk}
    T_k  \buildrel \Delta \over =  T_k^{\rm{u}} + T_k^{\rm{d}}+\left( {k - 1} \right) {T_{{\rm{ho}}}}.
\end{equation}
Therefore, by employing the serial communication scheme involving $K$ serving LEO satellites, the end-to-end delay is defined as the average of the total delay through any of the $K$ LEO satellites that successfully transmit the message data.
\begin{thm}
\label{thm3}
The end-to-end delay for the serial communication supported by $K$ LEO satellites in the mLEO-CN is expressed as
\begin{equation}\label{delay}
 {T_{\rm{s}}} = \left\{ {\begin{array}{*{20}{l}}
{\frac{T_1}{{{P_1}\left( \tau  \right)}},\qquad \qquad  \qquad  \qquad  K = 1,}\\
{ \frac{{T_1} + \sum\limits_{k = 2}^K {{T_k}  \prod\limits_{j = 1}^{k - 1} {\left( {1 - {P_j}\left( \tau  \right)} \right)} } }{{1 - \prod\limits_{k = 1}^K {\left( {1 - {P_k}\left( \tau  \right)} \right)} }},\quad  K \ge 2,}
\end{array}} \right.
\end{equation}
\noindent where ${T_1}$ and ${T_k}$ denote the total delay through the first and the ${k^{{\rm{th}}}}$ LEO satellites, respectively. Note that, ${T_k}$ is expressed in (\ref{delayTk}), on top of the next page, where $\omega$ is the length of the message packet.
\begin{figure*}[t]
\begin{equation} \label{delayTk}
\begin{aligned}
{T_k}= &\int_{{r_{\min }}}^{{r_{\max }}} {\int_0^\infty \int_0^\infty {\frac{\omega \mathbbm{1}\left( {\frac{{{p_{{\rm{u}}}}G_k^{\rm{u}}{{\left| {h_k^{\rm{u}}} \right|}^2}r_k^{ - \alpha }}}{{\cal G}}> \tau} \right)}{{B{{\log }_2}\left( {1 + \frac{{{p_{{\rm{u}}}}G_k^{\rm{u}}{{\left| {h_k^{\rm{u}}} \right|}^2}r_k^{ - \alpha }}}{{\cal G}}} \right)}} }}{\mathbbm{1}\left( {\frac{{{p_{{\rm{s}}}}G_k^{\rm{d}}{{\left| {h_k^{\rm{d}}} \right|}^2}r_k^{ - \alpha }}}{{\cal G}}> \tau} \right)}{f_{\left| {H_k^{\rm{u}}} \right|^2}}\left( h_k^{\rm{u}} \right){f_{\left| {H_k^{\rm{d}}} \right|^2}}\left( h_k^{\rm{d}} \right){ {f_{{R_k}}}\left( {{r_k}} \right)dh_k^{\rm{u}}dh_k^{\rm{d}}d{r_k}}\\
&+ \int_{{r_{\min }}}^{{r_{\max }}} {\int_0^\infty \int_0^\infty {\frac{\omega \mathbbm{1}\left( {\frac{{{p_{{\rm{u}}}}G_k^{\rm{u}}{{\left| {h_k^{\rm{u}}} \right|}^2}r_k^{ - \alpha }}}{{\cal G}}> \tau} \right)}{{B{{\log }_2}\left( {1 + \frac{{{p_{{\rm{s}}}}G_k^{\rm{d}}{{\left| {h_k^{\rm{d}}} \right|}^2}r_k^{ - \alpha }}}{{\cal G}}} \right)}}}}{\mathbbm{1}\left( {\frac{{{p_{{\rm{s}}}}G_k^{\rm{d}}{{\left| {h_k^{\rm{d}}} \right|}^2}r_k^{ - \alpha }}}{{\cal G}}> \tau} \right)}{{f_{\left| {H_k^{\rm{u}}} \right|^2}}\left( h_k^{\rm{u}} \right){f_{\left| {H_k^{\rm{d}}} \right|^2}}\left( h_k^{\rm{d}} \right) {f_{{R_k}}}\left( {{r_k}} \right)dh_k^{\rm{u}}dh_k^{\rm{d}}d{r_k}}\\
&+ \int_{{r_{\min }}}^{{r_{\max }}} {\int_0^\infty \int_0^\infty  {\left( {k - 1} \right){T_{{\rm{ho}}}}}}{{\mathbbm{1}\left( {\frac{{{p_{{\rm{u}}}}G_k^{\rm{u}}{{\left| {h_k^{\rm{u}}} \right|}^2}r_k^{ - \alpha }}}{{\cal G}}> \tau} \right) \mathbbm{1}\left( {\frac{{{p_{{\rm{s}}}}G_k^{\rm{d}}{{\left| {h_k^{\rm{d}}} \right|}^2}r_k^{ - \alpha }}}{{\cal G}}> \tau} \right)} {f_{\left| {H_k^{\rm{u}}} \right|^2}}\left( h_k^{\rm{u}} \right){f_{\left| {H_k^{\rm{d}}} \right|^2}}\left( h_k^{\rm{d}} \right){f_{{R_k}}}\left( {{r_k}} \right)}\\
&\times dh_k^{\rm{u}}dh_k^{\rm{d}}d{r_k}.
 \end{aligned}
\end{equation}\hrulefill
\end{figure*}
\end{thm}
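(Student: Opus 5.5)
We prove Theorem \ref{thm3} by treating the end-to-end delay as a conditional expectation. Let $S$ be the index of the satellite that completes the transmission. Then $T_{\rm{s}}$ is the expected total delay given that some satellite among the $K$ succeeds, i.e., $T_{\rm{s}}=\mathbb{E}[T_S\mathbbm{1}(\text{success})]/\mathbb{P}(\text{success})$. The plan has three parts: express the numerator as a sum over the stages of the serial scheme, identify the denominator with the complement form of the overall success probability, and evaluate each per-satellite term $T_k$ as the integral in (\ref{delayTk}).

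\textbf{Step 1: probability of each stopping index.} Under the serial probing mechanism, the event $\{S=k\}$ is the intersection of three independent pieces: satellites $1,\dots,k-1$ fail the threshold test, and satellite $k$ passes it. By the assumed mutual independence of the uplink and downlink channel gains across the $K$ channels, we have $\mathbb{P}(S=k)=P_k(\tau)\prod_{j=1}^{k-1}(1-P_j(\tau))$, exactly as in the serial expansion preceding Theorem \ref{thm1}. Summing over $k$ gives $\mathbb{P}(\text{success})=1-\prod_{k=1}^K(1-P_k(\tau))$. For $K=1$ this reduces to $P_1(\tau)$, which yields the denominators in (\ref{delay}).

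\textbf{Step 2: numerator.} We have $\mathbb{E}[T_S\mathbbm{1}(S=k)]=\mathbb{E}[T_k\mathbbm{1}(\text{sat } k\text{ succeeds})]\prod_{j<k}(1-P_j(\tau))$, because the event that the earlier satellites fail depends only on their own channels. We then define $T_k$ as the unnormalized expectation $\mathbb{E}[(T_k^{\rm u}+T_k^{\rm d}+(k-1)T_{\rm ho})\mathbbm{1}(\mathrm{SNR}_k^{\rm u}>\tau)\mathbbm{1}(\mathrm{SNR}_k^{\rm d}>\tau)]$. With this convention, the factor $P_k(\tau)$ is absorbed into $T_k$, and the numerator becomes $T_1+\sum_{k\ge2}T_k\prod_{j<k}(1-P_j(\tau))$.

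\textbf{Step 3: evaluating $T_k$.} The link delays are $T_k^{\rm u}=\omega/(B\log_2(1+\mathrm{SNR}_k^{\rm u}))$ and $T_k^{\rm d}=\omega/(B\log_2(1+\mathrm{SNR}_k^{\rm d}))$. We substitute the SNR expressions (\ref{snru}) and (\ref{snrd}), using the path-loss model, so that $\mathrm{SNR}_k^{\rm u}=p_{\rm u}G_k^{\rm u}|h_k^{\rm u}|^2r_k^{-\alpha}/\mathcal{G}$, and similarly for the downlink. We then condition on $R_k=r_k$ and average over the independent SR gains with PDF (\ref{pdfofSR}) and over $f_{R_k}$ from Lemma \ref{Lemma-1}. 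Linearity of expectation splits the integral into the uplink, downlink and handover terms of (\ref{delayTk}). In the handover term the inner integrals collapse to $(k-1)T_{\rm ho}P_k(\tau)$.

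\textbf{Main obstacle.} The main difficulty is interpretive rather than computational: making the normalization consistent. One has to justify writing $T_k$ with the success indicators included, so that $P_k(\tau)$ does not appear separately in the numerator, and check that this matches the $K=1$ case $T_1/P_1(\tau)$. One should also note that the failure factor $\overline q_k$ is not part of (\ref{delay}) as stated, so the argument treats the satellites as operational. Finally, the indicators keep each SNR above $\tau>0$, which keeps $1/\log_2(1+\mathrm{SNR})$ bounded and guarantees that the integrals in (\ref{delayTk}) are finite.
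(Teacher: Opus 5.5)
Your route is more complete than the paper's, but it justifies the key factorization with an argument that does not hold. In the paper, Appendix E only establishes (\ref{delayTk}). It writes $T_k^{\rm u}=\omega/(B\log_2(1+\mathrm{SNR}_k^{\rm u}))$ conditioned on both thresholds being met, replaces the conditioning by the two indicators, averages over the SR PDFs and $f_{R_k}$, and says the downlink and handover terms follow likewise. The outer formula (\ref{delay}) is never derived; it is simply the stated definition of the end-to-end delay.

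Your Step 3 is that same computation. Your Steps 1--2 add what the paper omits: they derive (\ref{delay}) as $\mathbb{E}[T_S\mathbbm{1}(\text{success})]/\mathbb{P}(\text{success})$. Your convention that $T_k$ is the \emph{unnormalized} mean $\mathbb{E}[(T_k^{\rm u}+T_k^{\rm d}+(k-1)T_{\rm ho})\mathbbm{1}(\mathrm{SNR}_k^{\rm u}>\tau)\mathbbm{1}(\mathrm{SNR}_k^{\rm d}>\tau)]$ is exactly what makes (\ref{delay}), including $T_1/P_1(\tau)$ for $K=1$, a genuine conditional mean. The paper calls $T_k^{\rm u}$ a conditional quantity but drops the $1/P_k(\tau)$ normalization when it passes to indicators. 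Your version is the one that reconciles (\ref{delayTk}) with (\ref{delay}).

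Your justification of the product form is wrong, however. Independence of the fading gains does not give $\mathbb{P}(S=k)=P_k(\tau)\prod_{j<k}(1-P_j(\tau))$, nor the factorization in Step 2. The success event of satellite $j$ depends on $R_j$, and the ordered BPP distances $R_1\le R_2\le\cdots\le R_K$ are dependent order statistics. Hence the events that satellite $j$ fails are not independent across $j$. Likewise, the integrand for satellite $k$, a function of $R_k$, is not independent of the earlier failures, so your remark that earlier failures ``depend only on their own channels'' is false.

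The factorization does hold conditionally on $(R_1,\dots,R_K)$. Unconditionally, computing it exactly would need the joint distance PDF, not the marginals of Lemma~\ref{Lemma-1}. The statement inherits this simplification from the serial expansion before Theorem~\ref{thm1}, where each $P_j(\tau)$ is computed from its own marginal $f_{R_j}$ and the results are multiplied. You should therefore state mutual independence of the per-satellite success events as the adopted modeling assumption rather than derive it from fading independence. With that change, your argument proves (\ref{delay}) within the paper's framework.

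A minor point on the handover term. The inner $h$-integrals give $(k-1)T_{\rm ho}$ times the conditional success probability given $r_k$. The value $(k-1)T_{\rm ho}P_k(\tau)$ appears only after integrating over $r_k$.
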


The \textit{Proof} of Theorem \ref{thm3} is given in Appendix E.

\subsection{Optimal Number of Serving LEO Satellites}
Although the serial communication supported by multiple LEO satellites can improve the overall transmission success probability, it also introduces extra handover delays at each switch, resulting in a significant increase in the end-to-end delay. To balance the overall performance of mLEO-CN, based on the derived analytical results presented in Section \uppercase\expandafter{\romannumeral3}, we formulate an optimization problem to maximize the ratio of the overall transmission success probability to the end-to-end delay, as given in (\ref{opt0}) at the top of the next page. Our objective is to find the optimal number of serving LEO satellites $K^\star$ under the maximum $\eta$, formulated as 
\begin{figure*}
{\color{black}\begin{equation}\label{opt0}
\eta = \frac{P_s(\tau)}{T_s}
=
\begin{cases}
\dfrac{\bar q_k{{{\left( {{P_1}\left( \tau  \right)} \right)}^2}}}{T_1}, & K=1, \\[1.2ex]
\dfrac{
\left(1-\prod\limits_{k=1}^{K}\left(1-\bar q_k{P_k}(\tau)\right)\right)
\left(1-\prod\limits_{k=1}^{K}\left(1-P_k(\tau)\right)\right)
}{
T_1+\sum\limits_{k=2}^{K} T_k \prod\limits_{j=1}^{k-1}\left(1-P_j(\tau)\right)
}, & K\ge 2.
\end{cases}
\end{equation}}\hrulefill
\end{figure*}
{\color{black}
\begin{subequations}
    \begin{align*}\label{op1}
    {\rm (\mathbf{P1}):}\quad
K^\star=\arg\max_{K} &\quad \eta\tag{22}\\
{{\rm{s}}{\rm{.t}}{\rm{.}}}\qquad &{{\Phi _{\rm{vis}}},{\Phi _s} \ne \phi }\tag{22a}\\
&{K \in {{{\mathbb{N}}^ + } }}, {K \le N}, \tag{22b}\\
&{{r_{\rm{l}}} \le {r_{\min }} \le {r_{\rm{u}}}},\tag{22c}\\
&{{P_s}\left( \tau  \right),{P_k}\left( \tau  \right),q \in \left[ {0,1} \right]},\tag{22d}
    \end{align*}   
\end{subequations}}

In problem ${\rm (\mathbf{P1})}$, constraint (22a) ensures that both the sets of visible satellites and serving satellites are non-empty. Constraint (22b) specifies that the constellation size and the number of serving LEO satellites must be positive integers and enforces that the number of serving LEO satellites does not exceed the constellation size. Additionally, constraint (22c) defines the permissible range for the constellation altitude, which varies from ${r_{\rm{l}}}$ to ${r_{\rm{u}}}$. Lastly, constraint (22d) ensures that both the overall transmission success probability and the end-to-end transmission success probability via the ${k^{{\rm{th}}}}$ LEO satellite fall within the range of zero to one. 

\begin{pro}\label{Prop1}
    The objective function $\eta$ is unimodal with respect to the number of serving satellites $K$.
\end{pro}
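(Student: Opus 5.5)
We prove that $\eta(K)$ is unimodal by looking at the increments of its numerator and denominator in $K$ and showing that the ratio of successive increments is monotone. Put $Q_K=\prod_{k=1}^K(1-P_k(\tau))$ and $\widetilde Q_K=\prod_{k=1}^K(1-\overline q_k P_k(\tau))$. From (\ref{opt0}),
\begin{equation*}
\eta(K)=\frac{N(K)}{D(K)},\qquad N(K)=(1-\widetilde Q_K)(1-Q_K),\qquad D(K)=T_1+\sum_{k=2}^K T_kQ_{k-1}.
\end{equation*}
The denominator increases by $\Delta D_K=D(K+1)-D(K)=T_{K+1}Q_K>0$. The numerator is positive, nondecreasing and bounded by one, with increment
\begin{equation*}
\Delta N_K=(1-\widetilde Q_{K+1})(1-Q_{K+1})-(1-\widetilde Q_K)(1-Q_K).
\end{equation*}
This increment is of order $Q_K P_{K+1}(\tau)$ plus a term of order $\widetilde Q_K\overline q_{K+1}P_{K+1}(\tau)$.

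The first step uses the ordering of the satellites. Since $R_k$ is stochastically increasing in $k$ (Lemma \ref{Lemma-1}) and the success probability in Theorem \ref{thm1} is decreasing in $r_k$, the sequence $P_k(\tau)$ is nonincreasing in $k$. By (\ref{delayk}), $T_k$ is increasing in $k$, because the handover term $(k-1)T_{\rm ho}$ grows and the rate term grows as the distance grows. With these two facts, the ratio of increments
\begin{equation*}
\rho_K=\frac{\Delta N_K}{\Delta D_K}\approx\frac{P_{K+1}(\tau)\bigl(1-\widetilde Q_K+\overline q_{K+1}(1-Q_K)\widetilde Q_K/Q_K\bigr)}{T_{K+1}}
\end{equation*}
should be nonincreasing in $K$. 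To make this exact rather than approximate, I would take $\overline q_k\equiv\overline q$, as the paper implicitly does with $\overline q_k$ outside the product. Then $\widetilde Q_K/Q_K=\prod_k\frac{1-\overline qP_k}{1-P_k}$, and I would bound the change of this factor using $P_{K+1}\le P_K$.

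The second step is the standard discrete quasi-concavity argument. We have
\begin{equation*}
\eta(K+1)-\eta(K)=\frac{\Delta D_K}{D(K+1)}\bigl(\rho_K-\eta(K)\bigr).
\end{equation*}
So $\eta$ increases exactly while $\rho_K>\eta(K)$. Suppose $\eta(K+1)\le\eta(K)$ at some $K$, so that $\rho_K\le\eta(K)$. Then $\eta(K+1)$ is a mediant-type average of $\eta(K)$ and $\rho_K$, which gives $\eta(K+1)\ge\rho_K$. Since $\rho_{K+1}\le\rho_K$, we get $\rho_{K+1}\le\eta(K+1)$. By induction $\eta$ stays nonincreasing from that point on. Hence $\eta$ increases up to some $K^\star$ and decreases afterwards, which is unimodality. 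The case $K=1$ in (\ref{opt0}) fits the general formula with empty sums, so it needs no separate treatment.

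The main obstacle is proving that $\rho_K$ is monotone. The product form of $N(K)$ mixes two geometric-like products, and the failure probabilities enter asymmetrically. My first attempt would be to show $\Delta N_{K+1}/\Delta N_K\le Q_{K+1}/Q_K=1-P_{K+1}$ by direct expansion with $P_{K+2}\le P_{K+1}$, combined with $T_{K+2}\ge T_{K+1}$. If that fails in general, I would fall back on a sufficient condition, for instance $\overline q$ close to one or $T_{\rm ho}$ dominating the variation of $T_k$. Under such a condition $N(K)\approx(1-Q_K)^2$, and $\rho_K\approx 2P_{K+1}(\tau)(1-Q_K)/T_{K+1}$. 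Monotonicity then reduces to showing that $P_{K+1}(1-Q_K)$ grows more slowly than $T_{K+1}$. This holds because $P_{K+1}(1-Q_K)\le P_{K+1}(1-Q_{K-1})+P_{K+1}P_K\,Q_{K-1}$ and $P$ is nonincreasing.
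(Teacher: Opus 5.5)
Your skeleton is the paper's Appendix~F argument. Your $\rho_K$ is the paper's marginal ratio $M_K=\Delta Y_K/\Delta W_K$, and the identity $\eta(K+1)-\eta(K)=\frac{\Delta D_K}{D(K+1)}(\rho_K-\eta(K))$ together with the weighted-average induction is the same. You rightly flag the monotonicity of $\rho_K$ as the whole content, but you do not prove it. It also cannot be proved from what you are using, namely $P_k(\tau)$ nonincreasing and $T_k$ nondecreasing. The paper does not prove it either; it only asserts $M_{K+1}\le M_K$ from a physical interpretation. With $\overline q_k\equiv\overline q$, the exact form is
\begin{equation*}
\rho_K=\frac{P_{K+1}(\tau)}{T_{K+1}}\Bigl[1-\widetilde Q_K+\overline q\,(1-Q_K)\frac{\widetilde Q_K}{Q_K}+\overline q\,P_{K+1}(\tau)\widetilde Q_K\Bigr],\qquad \frac{\widetilde Q_K}{Q_K}=\prod_{k=1}^{K}\frac{1-\overline q P_k(\tau)}{1-P_k(\tau)}.
\end{equation*}
Since $x\mapsto(1-\overline q x)/(1-x)$ is increasing, each factor of the product is at least $(1-\overline q p_{\min})/(1-p_{\min})>1$ whenever $\overline q<1$ and $P_k(\tau)\ge p_{\min}>0$. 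So the middle term grows geometrically in $K$, while $T_{K+1}$ grows at most linearly through $(k-1)T_{\rm ho}$. Hence $\rho_K\to\infty$, and because $\eta\le 1/T_1$, $\eta$ is eventually strictly increasing, however close $\overline q$ is to one. The same mechanism is present in the paper's model, where $R_k\le r_{\max}$ keeps $P_k(\tau)$ bounded away from zero.

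The cause is structural: $\Delta N_K$ decays like $\prod_k(1-\overline qP_k)$, whereas $\Delta D_K=T_{K+1}Q_K$ decays like $\prod_k(1-P_k)$. This is why your first attempt $\Delta N_{K+1}/\Delta N_K\le 1-P_{K+1}(\tau)$ fails; for $P_k\equiv P$ the ratio tends to $1-\overline qP>1-P$. It is also why the fallback $N(K)\approx(1-Q_K)^2$ does not help. A good approximation of $N(K)$ says nothing about $\Delta N_K/\Delta D_K$, since $\Delta D_K$ is itself of order $Q_K$.

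Here is a concrete instance meeting all your monotonicity hypotheses: $P_k(\tau)\equiv0.9$, $\overline q=0.9$, $T_k=1+3(k-1)$. Then $\rho_1\approx0.563$, $\rho_2\approx0.541$, $\rho_3\approx0.645$, $\rho_4\approx0.881$. Also $\eta(1)=0.729$, $\eta(2)\approx0.6816$, $\eta(3)\approx0.67493$, $\eta(4)\approx0.67473$, $\eta(5)\approx0.67491$. So $\eta$ decreases and then increases again, and the conclusion ``once $\eta$ decreases it keeps decreasing'' is false under these hypotheses. A tiny strictly decreasing perturbation of $P_k$ preserves this.

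Even at $\overline q=1$ one has $\rho_K=P_{K+1}(\tau)(2-Q_K-Q_{K+1})/T_{K+1}$, whose middle factor increases with $K$. Monotonicity therefore needs a quantitative lower bound on $T_{K+2}/T_{K+1}$ relative to the decay of $P_k$. Your final line is just the identity $1-Q_K=(1-Q_{K-1})+P_K(\tau)Q_{K-1}$ and does not supply that bound.

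Note also that $T_k$ in (\ref{delayTk}) carries the success indicators; its handover part is $(k-1)T_{\rm ho}P_k(\tau)$. So even ``$T_k$ nondecreasing'' is an assumption, not a consequence of the distance ordering.

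To obtain a valid result you must add an explicit hypothesis guaranteeing $\rho_{K+1}\le\rho_K$ on the range of $K$ considered, or verify it numerically there. Given that, your weighted-average induction goes through as written.
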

 
The proof of the Proposition \ref{Prop1} is given in Appendix F.


Given the above analysis and proof of unimodality, ${\rm (\mathbf{P1})}$ is still challenging to solve, since its objective function is derived from the stochastic-geometry analysis in Sections III-A and III-C and involves summations, products, and integrals, making it analytically intractable for conventional convex optimization methods. Furthermore, the decision variable, namely the number of serving satellites $K$, is integer-valued, and our objective is to find $K^\star$ under different network parameter settings. Therefore, directly applying continuous relaxation may yield non-integer solutions, which are not practically meaningful in the mLEO-CN. Since ${\rm (\mathbf{P1})}$ is a one-dimensional integer optimization problem, tractable search-based have been widely adopted in the existing studies on similar problems to efficiently obtain optimal solutions \cite{onlynoise4, Nearest-D-BPP1, Dwivedi2024Performance, Qu2026Coverage, Deng2025Satellite}. Motivated by these studies, we employ a one-dimensional search algorithm to solve ${\rm (\mathbf{P1})}$. Specifically, the core idea of the algorithm is to sequentially search over the number of serving satellites $K$ for a given constellation altitude and terminate the search once $\eta$ starts to decrease due to the unimodality of the objective function. The detailed steps are provided in Algorithm \ref{alg1}.
\begin{algorithm}[h]
\caption{\textcolor{black}{One-Dimensional Search Algorithm for Solving ${\rm (\mathbf{P1})}$}}
\label{alg1}
{\color{black}
\begin{algorithmic}[1]
\Require $\tau$, $r_{\min}$, and upper bound on the number of serving satellites $K_{\rm u}$
\Ensure maximum ratio of the overall transmission success probability to the end-to-end delay $\eta^\star$ and $K^\star$
\State Initialize $K \gets 1$ and compute $\eta(1)$
\For{$K$ from 2 to $K_{\rm u}$}
    \State Compute $\eta(K)$ using (21)
    \If{$\eta(K)<\eta(K-1)$}
        \State Set $K^\star \gets K-1$ and $\eta^\star \gets \eta(K-1)$
        \State \textbf{break}
    \EndIf
\EndFor
\If{the search reaches $K_{\rm u}$ without decrease}
    \State Set $K^\star \gets K_{\rm u}$ and $\eta^\star \gets \eta(K_{\rm u})$
\EndIf
\end{algorithmic}
}
\end{algorithm}

\section{Numerical Results}
\subsection{Parameter Settings}
In the simulation, the GU is positioned on the Earth at a specific point $\left({0,0,{r_{\rm{e}}}} \right)$ with Earth radius ${r_{\rm{e}}} = 6371 ~ {\rm{km}}$. The $N$ LEO satellite positions are created on a spherical layer at the altitude of ${r_{\min }}$ above the Earth's surface, constituting a BPP. All the LEO satellites are ranked from 1 to $N$ according to their distances to the GU. For simplicity, we set ${q_1} = {q_2} = ...{q_k} = q$. Unless otherwise specified, $N$ is set to 400, ${r_{\min }}$ is set to 1000 km, the predefined SNR threshold $\tau$ is set to -10 dB, ${r_{\rm{l}}}$ as well as ${r_{\rm{u}}}$ are set to 500 km and 1500 km, and ${K_{\rm{u}}}$ is set to 10. 

The transmit power of the GU is 3 W, and the transmit antenna gain and the receive antenna gain of the GU are set to 10 dBi and 0 dBi \cite{Nearest-D-BPP1}, respectively. For the LEO satellite, the transmit power is 10 W \cite{okati2020downlink}, the transmit antenna gain is set to 30 dBi, \cite{Nearest-D-BPP1}, and the receive antenna gain is set to  25 dBi \cite{Nearest-D-BPP1}. The parameters in the SR fading model $\left( {{{b_{\rm{s}}}},m,\Omega } \right)$ are set as $\left(0.4, 3, 0.2\right)$ \cite{CDFofSR2}. Moreover, the failure probability ${q}$ is set to 0 or 0.1 to represent the two states of each LEO satellite. The noise power spectral densities (PSDs) ${N_0}$ is set to $-$203 dBm/Hz \cite{Deng21Ultra}. The bandwidth of the channel in the space link is set as 250 MHz \cite{Liu2024Performance}. Therefore, by applying the relationship among noise PSD, bandwidth $B$, and noise power ${\sigma ^2}$, that is ${N_0}B = {\sigma ^2}$ \cite{onlynoise}, the noise power is $-$119 dBm. In addition, the path loss exponent $\alpha$ is set to 2.2, the carrier frequency ${f_c}$ of the signal is set to $30~{\rm{GHz}}$ and the handover delay ${T_{{\rm{ho}}}}$ at each switch is set to 20 ms \cite{jiang2023leo}.

\subsection{Results and Discussion} 
In all the figures, the lines correspond to the analytical results, while the markers indicate the outcomes from Monte Carlo simulations. We verify the accuracy of our derivations for the overall transmission success probability and the corresponding asymptotic expressions derived in Sections III-A and B, as shown in Fig. \ref{cov-T} to Fig. \ref{cov-app2}. Fig. \ref{cov-T} illustrates how the overall transmission success probability varies with the predefined SNR threshold, comparing theoretical and simulation results under failure probabilities of $q$=0 and $q$=0.1 for $K$=1, 2, 3. Theoretical results match the simulations, validating the accuracy of the derivations. It is observable that increasing the number of cooperative LEO satellites leads to a gradual enhancement in the overall transmission success probability. This is due to the fact that when the first LEO satellite suffers from a lower coverage performance due to factors such as channel fading, the employed serial communication scheme promotes additional LEO satellites to provide overlapping coverage for the GU, which in turn reduces the likelihood of outage. Therefore, the serial communication with multiple LEO satellites can effectively improve the end-to-end signal transmission performance compared to a single satellite-supported system. Moreover, the potential satellite failures significantly degrade the overall transmission success probability, particularly in the single-satellite scenario. For example, when $K$=1 and $q$=0, the transmission success probability ranges from 1 to 0.908 as the SNR threshold varies from -30 dB to -20 dB, whereas with $q$=0.1, it drops from 0.8 to 0.726. This reduction arises from the lack of alternative satellites to mitigate the impact of failure on the serving satellite. In contrast, when the failure probability $q$ =0.1, the transmission success probability is substantially higher in the cases with $K$=2 and $K$=3, compared to the single-satellite case with $K$=1. Thus, the serial communication with multiple LEO satellites effectively alleviates induced degradation, enhancing the reliability and robustness of LEO satellite communication networks. 
 \begin{figure}[t]	
 \centerline{\includegraphics[scale=0.55]{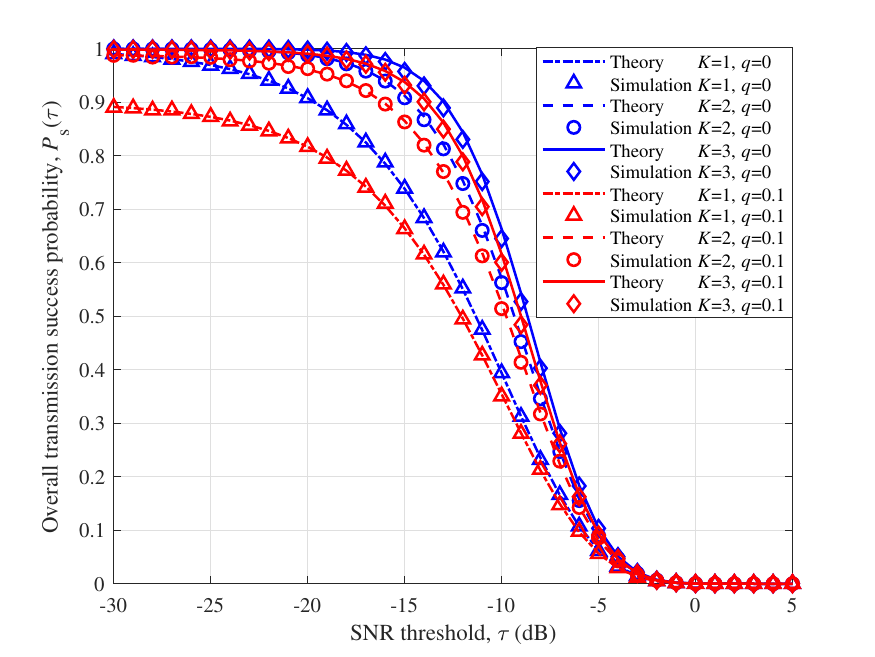}}
		\caption{Effect of predefined SNR threshold $\tau$ on the overall transmission success probability.}
		\label{cov-T}
 \end{figure}
 \begin{figure}[t]
		\centerline{\includegraphics[scale=0.55]{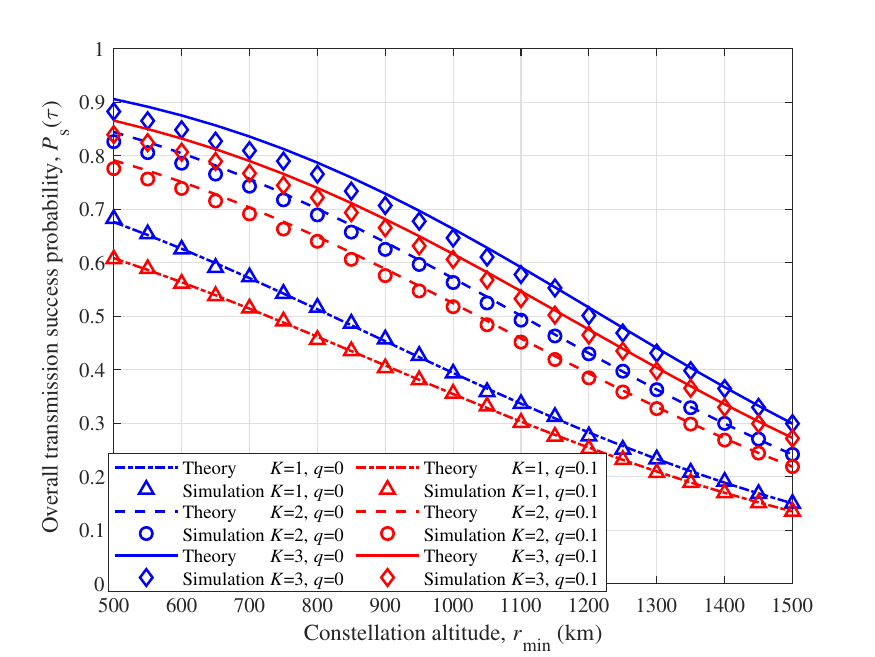}}
		\caption{Effect of the constellation altitude on the overall transmission success probability.}
		\label{cov-rmin}
 \end{figure}
  
Fig. \ref{cov-rmin} illustrates the effect of constellation altitude on the overall transmission success probability. As the constellation altitude increases, the overall transmission success probability declines progressively. This phenomenon arises due to the extended distance between the GU and each LEO satellite at higher altitudes, amplifying path loss in signal transmissions. Consequently, both the received SNR in the uplink and the downlink deteriorate, lowering the overall transmission success probability. It is important to highlight that a discrepancy between theoretical and simulation results is observed at lower constellation altitudes, particularly for $K$=2 and $K$=3. This discrepancy stems from the low degree of alignment between the employed PDF of the distance between the ${k^{{\rm{th}}}}$ LEO satellite and the GU and the actual distance distribution in practice. 
 \begin{figure}[t]
		\centerline{\includegraphics[scale=0.55]{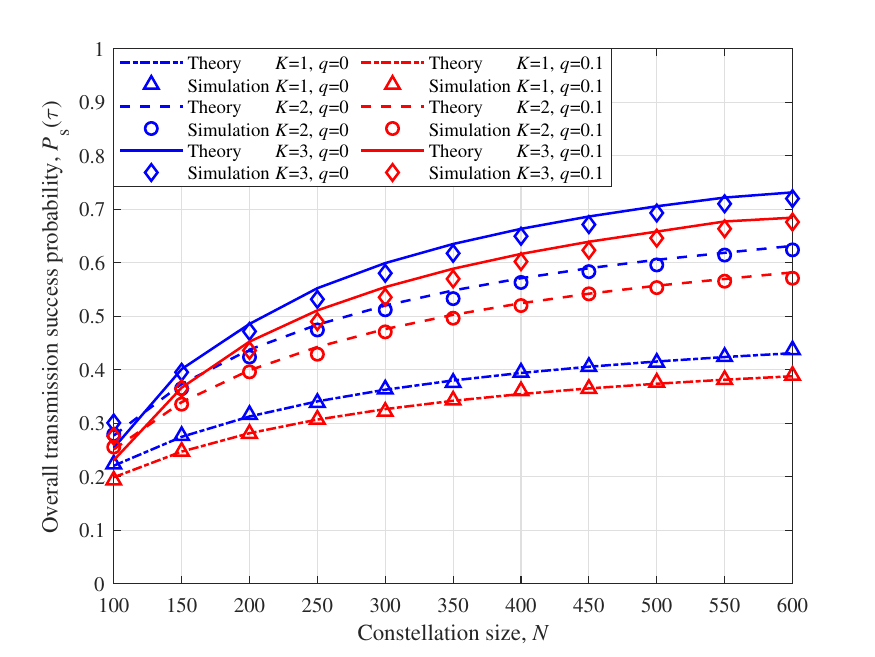}}
		\caption{Effect of the constellation size on the overall transmission success probability.}
		\label{cov-N}
 \end{figure}
   \begin{figure}[t] 
		\centerline{\includegraphics[scale=0.55]{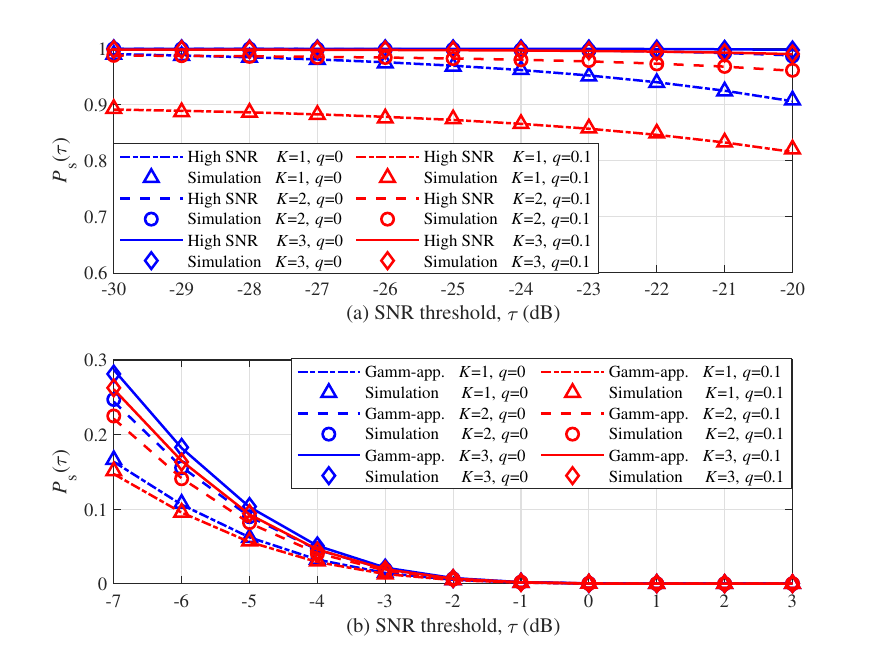}}
		\caption{Validation of asymptotic expressions on the end-to-end transmission success probability through the ${k^{{\rm{th}}}}$ LEO satellite under two special cases.}
		\label{cov-app1}
 \end{figure}
\begin{figure}[t]
		\centerline{\includegraphics[scale=0.55]{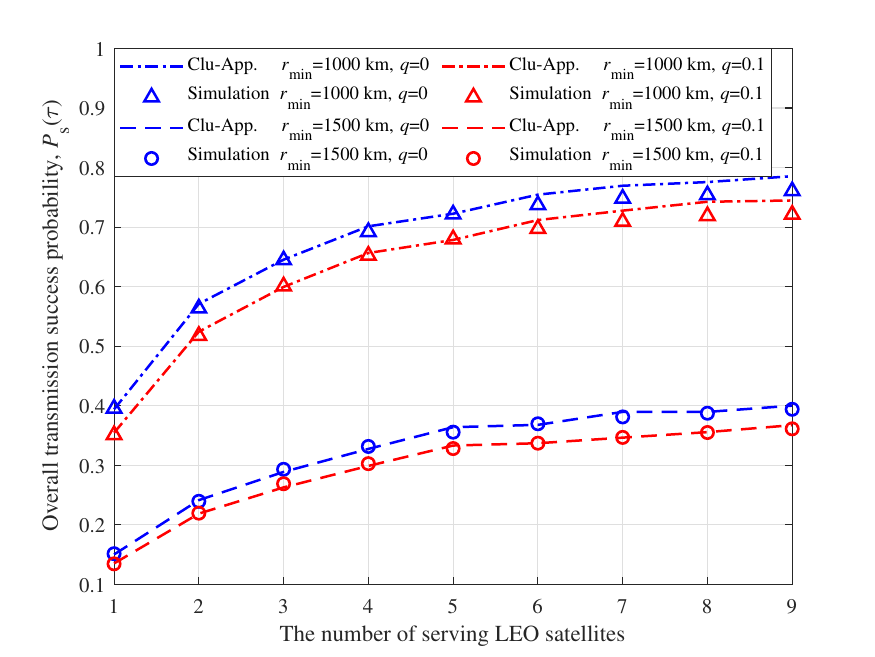}}
		\caption{Validation of the proposed cluster-based approximation approach on the overall transmission success probability with ${\cal K}$=3.}
		\label{cov-app2}
 \end{figure}
\begin{figure}[t]
		\centerline{\includegraphics[scale=0.55]{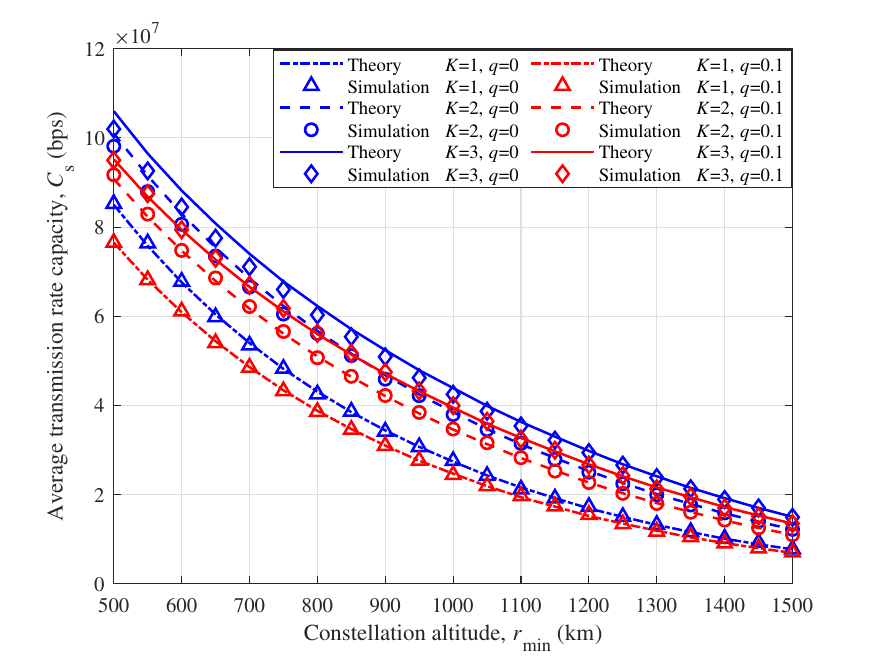}}
		\caption{Effect of the constellation altitude on the average transmission rate capacity.}
		\label{cap-rmin}
                \vspace{-1\baselineskip}
 \end{figure}
\begin{figure}[t]
		\centerline{\includegraphics[scale=0.55]{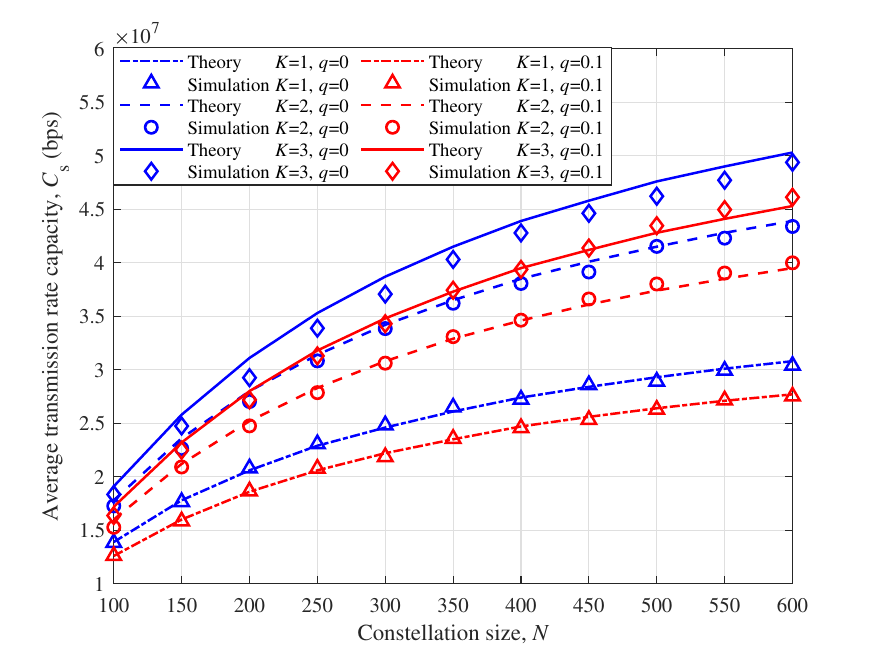}}
		\caption{Effect of the constellation size on the average transmission rate capacity.}
		\label{cap-N}
 \end{figure}
  \begin{figure}[t]
		\centerline{\includegraphics[scale=0.55]{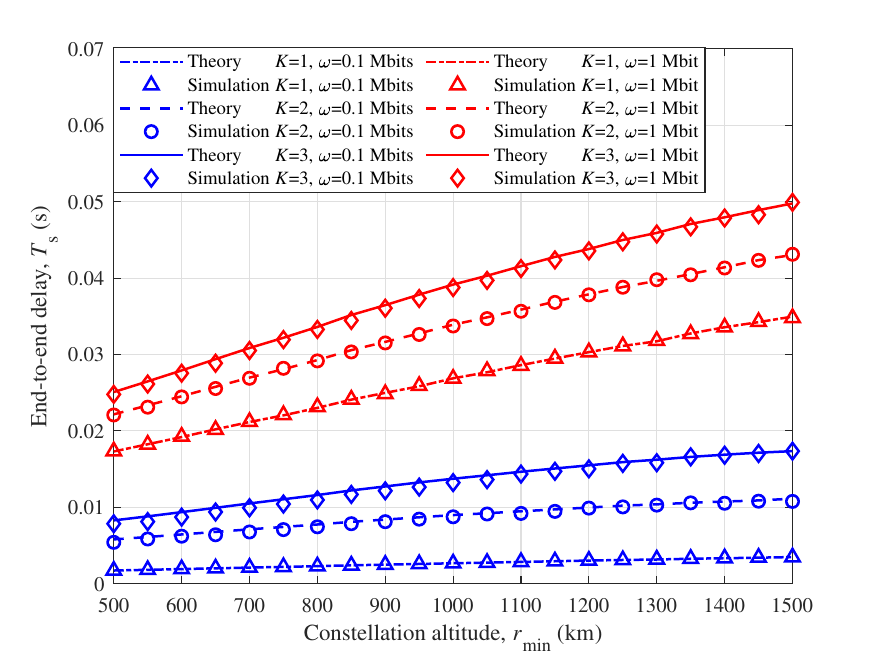}}
		\caption{Effect of the constellation altitude on the end-to-end delay when packet length $\omega$ is 0.1 Mbits and 1 Mbits.}
		\label{delay-rmin}
 \end{figure}
  \begin{figure}[t]
		\centerline{\includegraphics[scale=0.55]{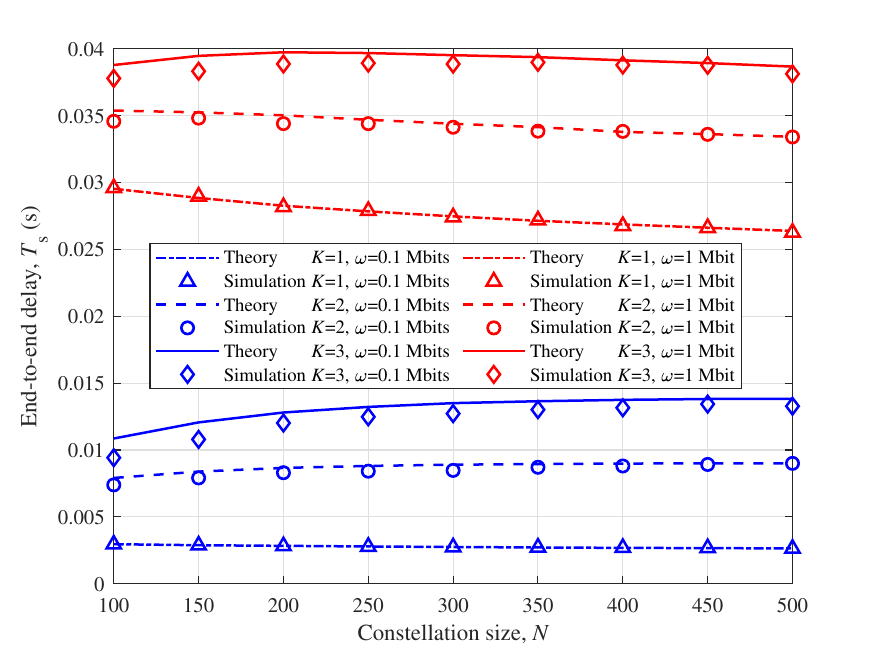}}
		\caption{Effect of the constellation size on the end-to-end delay when packet length $\omega$ is 0.1 Mbits and 1 Mbits.}
		\label{delay-N}
 \end{figure}

Fig. \ref{cov-N} illustrates the relationship between the overall transmission success probability and the constellation size. As the constellation size increases from 100 to 600, a steady rise in the overall transmission success probability is observed. This improvement is attributed to the increased density of LEO satellites, which reduces the average distance between the GU and the serving LEO satellites. The shorter signal propagation distance enhances the quality of both uplink and downlink channels, bringing an improvement in the overall transmission success probability. 

The validation of the derived asymptotic expressions for the overall transmission success probability in two scenarios is presented in Fig. \ref{cov-app1}. Fig. \ref{cov-app1}(a) illustrates the asymptotic expression for \emph{Case 1}, where the predefined SNR threshold ranges from -30 dB to -20 dB, while Fig. \ref{cov-app1}(b) depicts the results for \emph{Case 2}, considering a SNR threshold from -7 dB to 3 dB. The consistency between the analytical results and the simulation outcomes confirms the accuracy of the derived expressions. However, in \emph{Case 1}, the asymptotic expression in (\ref{highSNR}) is valid only in the high-SNR regime, limiting its applicability across the entire SNR range. In contrast, for \emph{Case 2}, results are provided within the SNR range of -7 dB to 3 dB. The results clearly demonstrate the validity of approximating the CDF of SR fading using the Gamma distribution.

The validation of the proposed cluster-based approximation approach with ${\cal K}$=3 is presented in Fig. \ref{cov-app2}. The close alignment between the approximated and simulated results confirms the effectiveness and accuracy of the proposed approach. Furthermore, as the number of serving LEO satellites increases, the overall transmission success probability exhibits a rising trend before gradually stabilizing. This observation suggests that while more serving LEO satellites enhance the coverage performance of the LEO satellite communication network, the enhancement is bounded and does not yield unlimited gains.

Our derivations of the average transmission rate capacity are validated through Monte Carlo simulations, with results presented in Figs. \ref{cap-rmin} and \ref{cap-N}. Specifically, they illustrate how the average transmission rate capacity varies with the constellation altitude and the constellation size, respectively. The results indicate that an increase in constellation altitude leads to a reduction in the average transmission rate capacity. In contrast, as the constellation size grows from 100 to 600, there is a gradual improvement in the average transmission rate capacity. Furthermore, the adoption of the serial communication scheme, e.g., $K$=2 and $K$=3, results in a notable enhancement in the average transmission rate capacity as compared to the single-satellite scenario with $K$=1.

The validation of the end-to-end delay expressions derived in Theorem \ref{thm3} is presented in Figs. \ref{delay-rmin} and \ref{delay-N}. Fig. \ref{delay-rmin} illustrates the impact of constellation altitude on the end-to-end delay when packet length $\omega$ is 0.1 Mbits and 1 Mbits, respectively. It can be seen that as the constellation altitude increases, the end-to-end delay exhibits an upward trend. This is attributed to higher constellation altitudes increasing the distance between LEO satellites and the GU, thereby resulting in longer transmission delays. Furthermore, the end-to-end delay increases with the number of serving LEO satellites due to the additional handover delay introduced in mLEO-CN. The variation of end-to-end delay with respect to the constellation size is depicted in Fig. \ref{delay-N}. It can be observed that the end-to-end delay decreases slightly as the constellation size increases from 100 to 500. This phenomenon arises because with a higher satellite density, the serving satellite in each Monte Carlo realization is more likely to be closer to the GU, thereby shortening the distance between the GU and the serving satellites, and then reducing the uplink and downlink transmission delays through the ${k^{{\rm{th}}}}$ LEO satellite. Consequently, the end-to-end delay decreases. However, when the constellation size is relatively small, a discrepancy exists between the theoretical and simulated values of the PDF describing the distance between an LEO satellite and the GU. This discrepancy results in a deviation in the end-to-end delay.

 \begin{figure}[t] 
		\centerline{\includegraphics[scale=0.55]{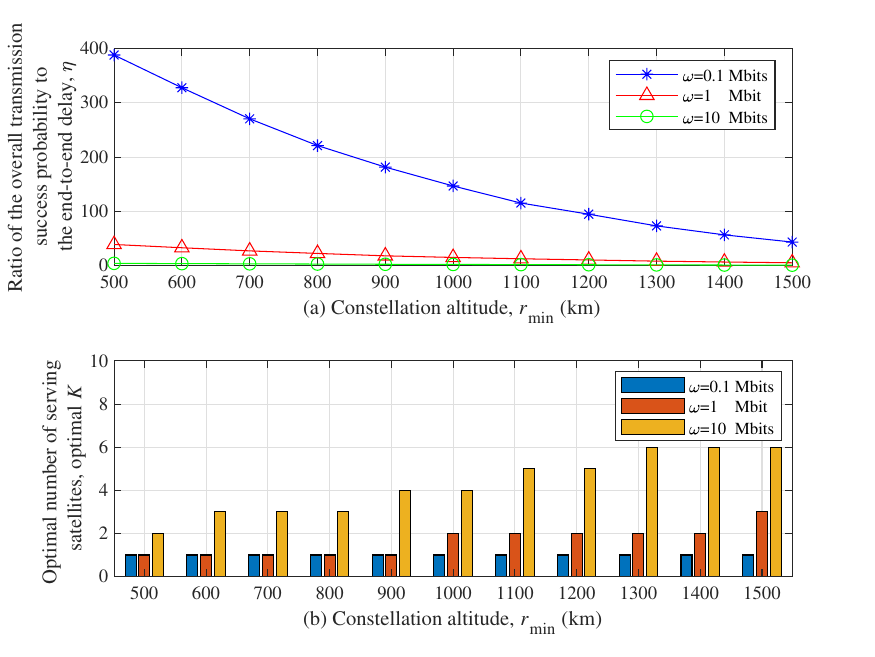}}
		\caption{Optimal number of LEO satellites and maximum overall transmission success probability vary with different constellation altitudes.}
		\label{cov-op}
 \end{figure}
 
Fig. \ref{cov-op} demonstrates the maximum ratio $\eta$ and the optimal number of serving LEO satellites at different constellation altitudes. It is obvious that when the constellation altitude is high, the overall transmission success probability of a single satellite decreases, necessitating a larger number of serving LEO satellites to enhance the overall performance of the mLEO-CN. Conversely, at lower altitudes, a single satellite provides sufficient coverage, making it preferable to reduce the number of serving satellites to reduce the end-to-end delay. For large packet transmissions, the end-to-end delay is primarily determined by transmission delay, with handover delay being negligible. Thus, increasing the number of serving LEO satellites is recommended to enhance network reliability. In contrast, the increased transmission delay makes serial communication supported by multiple LEO satellites less attractive, as frequent switches can degrade network performance despite improving the overall transmission success probability.

\section{Conclusion}
In this paper, we have considered an mLEO-CN where the typical GU is served by multiple LEO satellites serially and developed a theoretical framework to assess its key performance metrics, leveraging the tool of stochastic geometry. The metrics include overall transmission success probability, average transmission rate capacity, and end-to-end delay. To simplify the analytical complexity, we have further derived the asymptotic transmission success probability and proposed a cluster-based approximation for simplification. Based on these analytical results, we have formulated and solved an optimization problem to determine the optimal number of serving satellites that maximizes the trade-off between transmission reliability and latency. Extensive numerical results are presented to evaluate the impact of the key parameters on the end-to-end transmission performance. For the potential follow-up research, first, a key direction is to derive more tractable PDFs of the distance between each LEO satellite and the GU under various spatial distributions, such as PPP and BPP, thereby facilitating more precise system-level performance analysis in scenarios involving multiple satellites. Another important direction is to investigate the network performance of the mLEO-CN under severe co-channel interference, as this would further extend the present analytical framework to more spectrum-reuse-intensive and interference-limited scenarios.


{\appendices
\section*{Appendix}

\subsection{Proof of Theorem 1}
To obtain (\ref{thm1-2}), we derive from the definition of the end-to-end transmission success probability through the ${k^{{\rm{th}}}}$ LEO satellite, which is expressed as
\begin{equation}\label{cov-proof}
\begin{aligned}
    &{P_k}\left( \tau  \right)\buildrel \Delta \over =  \mathbb{P}\left( {{\rm{SNR}}_k^{\rm{u}} > \tau ,{\rm{SNR}}_k^{\rm{d}} > \tau } \right)\\
&= \mathbb{P}\left( {\frac{{{p_{\rm{u}}}G_k^{\rm{u}}L_k\left( {{R_k}} \right){{\left| {H_k^{\rm{u}}} \right|}^2}}}{{\sigma _{k}^2}} > \tau ,\frac{{{p_{{\rm{s}}}}G_k^{\rm{d}}L_k\left( {{R_k}} \right){{\left| {H_k^{\rm{d}}} \right|}^2}}}{{\sigma _{k}^2}} > \tau } \right)\\
 &\underline{\underline {\left( {a} \right)}} \mathbb{P}\left( {\frac{{{p_{\rm{u}}}G_k^{\rm{u}}L_k\left( {{R_k}} \right){{\left| {H_k^{\rm{u}}} \right|}^2}}}{{\sigma _{k}^2}} > \tau } \right)\\
 & \quad \times \mathbb{P}\left( {\frac{{{p_{{\rm{s}}}}G_k^{\rm{d}}L_k\left( {{R_k}} \right){{\left| {H_k^{\rm{d}}} \right|}^2}}}{{\sigma _{k}^2}} > \tau } \right)\\
 &= \mathbb{P}\left( {{\left| {H_k^{\rm{u}}} \right|^2} > \frac{{\tau{\cal G}R_k^\alpha }}{{{p_{{\rm{u}}}}G_k^{\rm{u}}}}} \right)\mathbb{P}\left( {{{\left| {H_k^{\rm{d}}} \right|}^2} > \frac{{\tau{\cal G}R_k^\alpha }}{{{p_{{\rm{s}}}}G_k^{\rm{d}}}}} \right)\\
 &\underline{\underline {\left( {b} \right)}} {\overline F _{{\left| {H_k^{\rm{u}}} \right|^2}}}\left( \frac{{\tau{\cal G}R_k^\alpha }}{{{p_{{\rm{u}}}}G_k^{\rm{u}}}} \right){\overline F _{{\left| {H_k^{\rm{d}}} \right|^2}}}\left( \frac{{\tau{\cal G}R_k^\alpha }}{{{p_{{\rm{s}}}}G_k^{\rm{d}}}} \right)\\
&\underline{\underline {\left( {c} \right)}} \int_{{r_{\min }}}^{{r_{\max }}} {\overline F _{\left| {H_k^{\rm{u}}} \right|^2}}\left( \frac{{\tau{\cal G}r_k^\alpha }}{{{p_{{\rm{u}}}}G_k^{\rm{u}}}} \right) {\overline F _{\left| {H_k^{\rm{d}}} \right|^2}}\left( \frac{{\tau{\cal G}r_k^\alpha }}{{{p_{{\rm{s}}}}G_k^{\rm{d}}}} \right) {f_{{R_k}}}\left( {{r_k}} \right)d{r_k}.
\end{aligned}
\end{equation}
\noindent where $(a)$ is obtained from the independence of the uplink and downlink in the ${k^{{\rm{th}}}}$ channel, $(b)$ employs the complementary CDF (CCDF) of the channel fading gain, and $(c)$ is obtained using the PDF of the distance in Lemma \ref{Lemma-1}. Then, by substituting the expression for the CDF of the SR fading model from step ($c$), (\ref{thm1-2}) is obtained. \hfill $\square$

\subsection{Proof of Corollary 1}
To obtain (\ref{highSNR}), we continue to derive the expression from the last line of (\ref{cov-proof}), given as 
\begin{align*}
  &P_k^\infty \left( \tau  \right)\\
&=\int_{{r_{\min }}}^{{r_{\max }}} \overline F _{{{\left| {H_k^{\rm{u}}} \right|}^2}}^\infty \left( {{{\tau Gr_k^\alpha } \over {{p_{\rm{u}}}G_k^{\rm{u}}}}} \right)\overline F _{{{\left| {H_k^{\rm{d}}} \right|}^2}}^\infty \left( {{{\tau Gr_k^\alpha } \over {{p_{\rm{s}}}G_k^{\rm{d}}}}} \right) {f_{{R_k}}}\left( {{r_k}} \right)d{r_k}\\
&= \int_{{r_{\min }}}^{{r_{\max }}} \left( {1-F _{{{\left| {H_k^{\rm{u}}} \right|}^2}}^\infty \left( {\frac{{\tau{\cal G}r_k^\alpha }}{{{p_{{\rm{u}}}}G_k^{\rm{u}}}}} \right)} \right) \left( {1-F _{{{\left| {H_k^{\rm{d}}} \right|}^2}}^\infty \left( {\frac{{\tau{\cal G}r_k^\alpha }}{{{p_{{\rm{s}}}}G_k^{\rm{d}}}}} \right)} \right)\\
&\quad \times {f_{{R_k}}}\left( {{r_k}} \right)d{r_k}\\
& \overset{\left(a\right)}{\approx} \int_{{r_{\min }}}^{{r_{\max }}} \left(1-{\frac{{\mu \tau {\cal G}r_k^\alpha }}{{{p_{{\rm{u}}}}G_k^{\rm{u}}}}} \right) \left(1- {\frac{{\mu \tau{\cal G}r_k^\alpha }}{{{p_{{\rm{s}}}}G_k^{\rm{d}}}}} \right){f_{{R_k}}}\left( {{r_k}} \right)d{r_k}
\end{align*}
\begin{equation}\label{cor1}
\begin{aligned}
& = \int_{{r_{\min }}}^{{r_{\max }}} \left(1+{\frac{{\mu \tau{\cal G}r_k^\alpha }}{{{p_{{\rm{u}}}}G_k^{\rm{u}}}}}{\frac{{\mu \tau{\cal G}r_k^\alpha }}{{{p_{{\rm{s}}}}G_k^{\rm{d}}}}}-\frac{{\mu \tau {\cal G}r_k^\alpha }}{{{p_{{\rm{u}}}}G_k^{\rm{u}}}}- {\frac{{\mu \tau{\cal G}r_k^\alpha }}{{{p_{{\rm{s}}}}G_k^{\rm{d}}}}}\right)\\
&\quad \times {f_{{R_k}}}\left( {{r_k}} \right)d{r_k},
\end{aligned}
\end{equation}
\noindent where $(a)$ is obtained by employing the approximated CDF of SR fading at high SNR regime in (\ref{appCDF1}). Due to the linear combination of the four items, the expression is expanded as 
\begin{equation}
\begin{aligned}
& \int_{{r_{\min }}}^{{r_{\max }}} {f_{{R_k}}}\left( {{r_k}} \right)d{r_k}+\int_{{r_{\min }}}^{{r_{\max }}}{\frac{{\mu \tau{\cal G}r_k^\alpha }}{{{p_{{\rm{u}}}}G_k^{\rm{u}}}}}{\frac{{\mu \tau{\cal G}r_k^\alpha }}{{{p_{{\rm{s}}}}G_k^{\rm{d}}}}}{f_{{R_k}}}\left( {{r_k}} \right)d{r_k}\\
& -\int_{{r_{\min }}}^{{r_{\max }}} \frac{{\mu \tau{\cal G}r_k^\alpha }}{{{p_{{\rm{u}}}}G_k^{\rm{u}}}}{f_{{R_k}}}\left( {{r_k}} \right)d{r_k} -\int_{{r_{\min }}}^{{r_{\max }}}  {\frac{{\mu \tau{\cal G}r_k^\alpha }}{{{p_{{\rm{s}}}}G_k^{\rm{d}}}}}{f_{{R_k}}}\left( {{r_k}} \right)d{r_k}.
\end{aligned}
\end{equation}
\noindent Based on the property that the integral of the PDF over its entire domain equals one, the first item can be calculated. For the rest of the three items, due to the similarity in the derivation steps, we take the second item as an example to begin the derivation,  which is expressed as
\begin{equation}
\begin{aligned}
& \int_{{r_{\min }}}^{{r_{\max }}} {\frac{{\mu \tau{\cal G}r_k^\alpha }}{{{p_{{\rm{u}}}}G_k^{\rm{u}}}}\frac{{\mu \tau{\cal G}r_k^\alpha }}{{{p_{{\rm{s}}}}G_k^{\rm{d}}}}} {f_{{R_k}}}\left( {{r_k}} \right)d{r_k}\\
& = \frac{{{\mu ^2}\tau^2{{\cal G}^2}}}{{{p_{{\rm{u}}}}G_k^{\rm{u}}{p_{{\rm{s}}}}G_k^{\rm{d}}\Xi }}\int_{{r_{\min }}}^{{r_{\max }}} {r_k^{2\alpha }\sum\limits_{l = 0}^{k - 1} {\sum\limits_{n = 0}^{N - k} {\frac{{\psi }}{\Theta }r_k^{2n + 2l + 1}} } } d{r_k}\\
& = \frac{{{\mu ^2}\tau^2{{\cal G}^2}}}{{{p_{{\rm{u}}}}G_k^{\rm{u}}{p_{{\rm{s}}}}G_k^{\rm{d}}\Xi }}\sum\limits_{l = 0}^{k - 1} {\sum\limits_{n = 0}^{N - k} {\frac{\psi }{\Theta }\int_{{r_{\min }}}^{{r_{\max }}} {r_k^{2\alpha  + 2n + 2l + 1}} d{r_k}} } \\
&= \frac{{{\mu ^2}\tau^2{{\cal G}^2}}}{{{p_{{\rm{u}}}}G_k^{\rm{u}}{p_{{\rm{s}}}} G_k^{\rm{d}}\Xi }}\sum\limits_{l = 0}^{k - 1} {\sum\limits_{n = 0}^{N - k} {\frac{\psi }{\Theta }\left( {\frac{{r_{\max }^{2\alpha  + \Delta} - r_{\min }^{2\alpha  + \Delta }}}{{2\alpha  + \Delta}}} \right)} }.
\end{aligned}
\end{equation}
\noindent Following similar steps, we can derive similar closed-form expressions for other items, and then $(\ref{highSNR})$ is obtained. \hfill $\square$

\subsection{Proof of Corollary 2}
To obtain (\ref{covapp3}), we start from the last line of (\ref{cov-proof}), given as
\begin{align*}
    &{P_k}\left( \tau  \right)\buildrel \Delta \over =  \mathbb{P}\left( {{\rm{SNR}}_k^{\rm{u}} > \tau ,{\rm{SNR}}_k^{\rm{d}} > \tau } \right)\\
&= \mathbb{P}\left( {\frac{{{p_{\rm{u}}}G_k^{\rm{u}}L_k\left( {{R_k}} \right){{\left| {H_k^{\rm{u}}} \right|}^2}}}{{\sigma _{k}^2}} > \tau ,\frac{{{p_{{\rm{s}}}}G_k^{\rm{d}}L_k\left( {{R_k}} \right){{\left| {H_k^{\rm{d}}} \right|}^2}}}{{\sigma _{k}^2}} > \tau } \right)\\
& \overset{(a)}{\approx} \int_{{r_{\min }}}^{{r_{\max }}} {\left(1- {\frac{{\gamma  \left( {{\alpha _s},\frac{1}{{{\beta _s}}}\frac{{\tau{\cal G}r_k^\alpha }}{{{p_{{\rm{u}}}}G_k^{\rm{u}}}}} \right)}}{{\Gamma \left( {{\alpha _s}} \right)}}} \right)\left( {1-\frac{{\gamma \left( {{\alpha _s},\frac{1}{{{\beta _s}}}\frac{{\tau{\cal G}r_k^\alpha }}{{{p_{{\rm{s}}}}G_k^{\rm{d}}}}} \right)}}{{\Gamma \left( {{\alpha _s}} \right)}} } \right) }\\
&\quad \times {f_{{R_k}}}\left( {{r_k}} \right)d{r_k}\\
& \underline{\underline {\left( b \right)}}  \int_{{r_{\min }}}^{{r_{\max }}} {\frac{{\Gamma \left( {{\alpha _s},\frac{1}{{{\beta _s}}}\frac{{\tau{\cal G}r_k^\alpha }}{{{p_{{\rm{u}}}}G_k^{\rm{u}}}}} \right)}}{{\Gamma \left( {{\alpha _s}} \right)}}\frac{{\Gamma \left( {{\alpha _s},\frac{1}{{{\beta _s}}}\frac{{\tau{\cal G}r_k^\alpha }}{{{p_{{\rm{s}}}}G_k^{\rm{d}}}}} \right)}}{{\Gamma \left( {{\alpha _s}} \right)}}} {f_{{R_k}}}\left( {{r_k}} \right)d{r_k}\\
& \underline{\underline {\left( c \right)}}   \int_{{r_{\min }}}^{{r_{\max }}} {\exp \left( { - \frac{{\tau{\cal G}r_k^\alpha }}{{{p_{{\rm{u}}}}G_k^{\rm{u}}}}} \right)} \sum\limits_{{t_1} = 0}^{\widetilde {{\alpha _s}} - 1} {\frac{{{{\left( {\frac{{\tau{\cal G}r_k^\alpha }}{{{p_{{\rm{u}}}}G_k^{\rm{u}}}}} \right)}^{{t_1}}}}}{{\Gamma \left( { {t_1} + 1} \right)}}}\\
&\quad \times{ \exp \left( { - \frac{{\tau{\cal G}r_k^\alpha }}{{{p_{{\rm{s}}}}G_k^{\rm{d}}}}} \right)} \sum\limits_{{t_2} = 0}^{\widetilde {{\alpha _s}} - 1} {\frac{{{{\left( {\frac{{\tau{\cal G}r_k^\alpha }}{{{p_{{\rm{s}}}}G_k^{\rm{d}}}}} \right)}^{{t_2}}}}}{{\Gamma \left( { {t_2} + 1} \right) }}} {f_{{R_k}}}\left( {{r_k}} \right) d{r_k}\\
& = \frac{1}{\Xi }\sum\limits_{l = 0}^{k - 1} {\sum\limits_{n = 0}^{N - k} {\sum\limits_{{t_1} = 0}^{\widetilde {{\alpha _s}} - 1} {\sum\limits_{{t_2} = 0}^{\widetilde {{\alpha _s}} - 1} {\frac{\psi }{\Theta }\frac{{{{\left( {\frac{\tau{\cal G}}{{{p_{{\rm{u}}}}G_k^{\rm{u}}}}} \right)}^{{t_1}}}{{\left( {\frac{\tau{\cal G}}{{{p_{{\rm{s}}}}G_k^{\rm{d}}}}} \right)}^{{t_2}}}}}{{\Gamma \left( { {t_1} + 1} \right)\Gamma \left( { {t_2} + 1} \right)}}}  } } }
\end{align*}
 \begin{equation}
    \begin{aligned}\label{cor3}
&\quad \times \int_{{r_{\min }}}^{{r_{\max }}} {r_k^{\left( {{t_1} + {t_2}} \right)\alpha  + 2n + 2l + 1}\exp \left( { - {\rm A}r_k^\alpha } \right)} d{r_k}\\
&=\frac{1}{\Xi }\sum\limits_{l = 0}^{k - 1} {\sum\limits_{n = 0}^{N - k} {\sum\limits_{{t_1} = 0}^{\widetilde {{\alpha _s}} - 1} {\sum\limits_{{t_2} = 0}^{\widetilde {{\alpha _s}} - 1} {\frac{\psi }{\Theta }\frac{{{{\left( {\frac{\tau{\cal G}}{{{p_{{\rm{u}}}}G_k^{\rm{u}}}}} \right)}^{{t_1}}}{{\left( {\frac{\tau{\cal G}}{{{p_{{\rm{s}}}}G_k^{\rm{d}}}}} \right)}^{{t_2}}}}}{{\alpha \Gamma \left( { {t_1} + 1} \right)\Gamma \left( { {t_2} + 1} \right)}}} } } }\\
&\quad \times \left( \frac{{r_{{\rm{min}}}^{\rm{B}}\Gamma \left( {\frac{{\rm{B}}}{\alpha }{\rm{A}}r_{{\rm{min}}}^\alpha } \right)}}{{{{\left( {{\rm{A}}r_{{\rm{min}}}^\alpha } \right)}^{\frac{{\rm{B}}}{\alpha }}}}}-\frac{{r_{{\rm{max}}}^{\rm{B}}\Gamma \left( {\frac{{\rm{B}}}{\alpha }{\rm{A}}r_{{\rm{max}}}^\alpha } \right)}}{{{{\left( {{\rm{A}}r_{{\rm{max}}}^\alpha } \right)}^{\frac{{\rm{B}}}{\alpha }}}}} \right),
\end{aligned}
\end{equation}
\noindent where $(a)$ is obtained using the approximated CDF of the SR fading in $(\ref{cdfappsr3})$, $(b)$ is from the relationship between the upper and lower incomplete gamma functions, i.e., $\frac{{\gamma (s,x)}}{{{\rm{\Gamma }}(s)}}{\rm{ + }}\frac{{{\rm{\Gamma }}(s,x)}}{{{\rm{\Gamma }}(s)}}=1$. In $(c)$, since $\widetilde {{\alpha _s}}$ is an integer, $(c)$ is obtained from
 ${\rm{\Gamma }}\left( {n,x} \right) = \left( {n - 1} \right)!{{\rm{e}}^{ - x}}\sum\limits_{k = 0}^{n - 1} {\frac{{{x^k}}}{{k!}}}$ \cite{alzenad2019coverage}.  \hfill $\square$
 
 \subsection{Proof of Theorem 2}
To obtain (\ref{capi}), we start with the definition of the end-to-end transmit rate capacity through the ${k^{{\rm{th}}}}$ LEO satellite, which is shown in (\ref{ProofofCi}), at the top of the next page, where $(a)$ is obtained by transforming the condition of both the uplink from the GU to the ${k^{{\rm{th}}}}$ LEO satellite and the downlink from the ${k^{{\rm{th}}}}$ LEO satellite to the GU to indicator functions; $(b)$ employs the PDF of the SR fading model, which is given in $(\ref{pdfofSR})$; $(c)$ is obtained by averaging ${R_k}$ using the PDF in Lemma \ref{Lemma-1}. \hfill $\square$
\begin{figure*}[t]
\begin{equation}\label{ProofofCi}
 \begin{aligned}
{C_k}& \buildrel \Delta \over = B{\log _2}\left( {1 + \min \left( {{\rm{SNR}}_k^{\rm{u}},{\rm{SNR}}_k^{\rm{d}}} \right)|\left( {{\rm{SNR}}_k^{\rm{u}} > \tau ,{\rm{SNR}}_k^{\rm{u}} > \tau } \right)} \right)\\
&= B{\log _2}\left( {1 + \min \left( {\frac{{{p_{{\rm{u}}}}G_k^{\rm{u}}{{\left| {H_k^{\rm{u}}} \right|}^2}R_k^{ - \alpha }}}{{\cal G}},\frac{{{p_{{\rm{s}}}}G_k^{\rm{d}}{{\left| {H_k^{\rm{d}}} \right|}^2}R_k^{ - \alpha }}}{{\cal G}}} \right) } \right.  \left. {|\left( {\frac{{{p_{{\rm{u}}}}G_k^{\rm{u}}{{\left| {H_k^{\rm{u}}} \right|}^2}R_k^{ - \alpha }}}{{\cal G}} > \tau ,\frac{{{p_{{\rm{s}}}}G_k^{\rm{d}}{{\left| {H_k^{\rm{d}}} \right|}^2}R_k^{ - \alpha }}}{{\cal G}} > \tau } \right)} \right)\\
&\underline{\underline {\left( a \right)}} B{\log _2}\left( {1 + \min \left( {\frac{{{p_{{\rm{u}}}}G_k^{\rm{u}}{{\left| {H_k^{\rm{u}}} \right|}^2}R_k^{ - \alpha }}}{{\cal G}},\frac{{{p_{{\rm{s}}}}G_k^{\rm{d}}{{\left| {H_k^{\rm{d}}} \right|}^2}R_k^{ - \alpha }}}{{\cal G}}} \right)} \right) \mathbbm{1} \left( {\frac{{{p_{{\rm{u}}}}G_k^{\rm{u}}{{\left| {H_k^{\rm{u}}} \right|}^2}R_k^{ - \alpha }}}{{\cal G}}} >\tau\right) \mathbbm{1} \left( {\frac{{{p_{{\rm{s}}}}G_k^{\rm{d}}{{\left| {H_k^{\rm{d}}} \right|}^2}R_k^{ - \alpha }}}{{\cal G}}}>\tau \right)\\
&\underline{\underline {\left( b \right)}} B\int_0^\infty  {\int_0^\infty  {{{\log }_2}\left( {1 + \min \left( {\frac{{{p_{{\rm{u}}}}G_k^{\rm{u}}{{\left| {h_k^{\rm{u}}} \right|}^2}R_k^{ - \alpha }}}{{\cal G}},\frac{{{p_{{\rm{s}}}}G_k^{\rm{d}}{{\left| {h_k^{\rm{d}}} \right|}^2}R_k^{ - \alpha }}}{{\cal G}}} \right)} \right)} }  \mathbbm{1} \left( {\frac{{{p_{{\rm{u}}}}G_k^{\rm{u}}{{\left| {h_k^{\rm{u}}} \right|}^2}R_k^{ - \alpha }}}{{\cal G}}}>\tau \right) \mathbbm{1} \left( {\frac{{{p_{{\rm{s}}}}G_k^{\rm{d}}{{\left| {h_k^{\rm{d}}} \right|}^2}R_k^{ - \alpha }}}{{\cal G}}}>\tau \right)\\
&\quad \times {f_{\left| {H_k^{\rm{u}}} \right|^2}}\left( {{{h_k^{\rm{u}}}}} \right){f_{\left| {H_k^{\rm{u}}} \right|^2}}\left( {{h_k^{\rm{d}}}} \right)d{{h_k^{\rm{u}}}}d{{h_k^{\rm{d}}}}\\
&\underline{\underline {\left( c \right)}} B\int_{{r_{\min }}}^{{r_{\max }}} {\int_0^\infty  {\int_0^\infty  {{{\log }_2}\left( {1 + \min \left( {\frac{{{p_{{\rm{u}}}}G_k^{\rm{u}}{{\left| {h_k^{\rm{u}}} \right|}^2}r_k^{ - \alpha }}}{{\cal G}},\frac{{{p_{{\rm{s}}}}G_k^{\rm{d}}{{\left| {h_k^{\rm{d}}} \right|}^2}r_k^{ - \alpha }}}{{\cal G}}} \right)} \right)} } }  \mathbbm{1} \left( {\frac{{{p_{{\rm{u}}}}G_k^{\rm{u}}{{\left| {h_k^{\rm{u}}} \right|}^2}r_k^{ - \alpha }}}{{\cal G}} }>\tau \right) \\
&\quad \times \mathbbm{1}\left( {\frac{{{p_{{\rm{s}}}}G_k^{\rm{d}}{{\left| {h_k^{\rm{d}}} \right|}^2}r_k^{ - \alpha }}}{{\cal G}} }>\tau \right) {f_{\left| {H_k^{\rm{u}}} \right|^2}}\left( {{{h_k^{\rm{u}}}}} \right){f_{\left| {H_k^{\rm{u}}} \right|^2}}\left( {{h_k^{\rm{d}}}} \right)d{{h_k^{\rm{u}}}}d{{h_k^{\rm{d}}}}f_{R_k}\left( {{r_k}} \right)d{r_k}.
\end{aligned}
\end{equation}\hrulefill
\end{figure*}

 \subsection{Proof of Theorem 3}
In (\ref{delayk}), there are three items, that is $T_k^{\rm{u}}$, $T_k^{\rm{d}}$, and $\left( {k - 1} \right) {T_{{\rm{ho}}}}$. Due to their similar derivation steps, we take the first item as an example to start the derivation, given as 
\begin{align*}
T_k^{\rm{u}}& \buildrel \Delta \over =  \frac{\omega}{{B{{\log }_2}\left( {1 + {\rm{SNR}}_k^{\rm{u}}} \right)}}\\
 &=\left( { \frac{\omega}{{B{{\log }_2}\left( {1 + \frac{{{p_{{\rm{u}}}}G_k^{\rm{u}}{{\left| {H_k^{\rm{u}}} \right|}^2}R_k^{ - \alpha }}}{{\cal G}}} \right)}}} \right.
\end{align*}
\begin{equation}\label{proofofdelay}
    \begin{aligned}
    &\left. {| \frac{{{p_{{\rm{u}}}}G_k^{\rm{u}}{{\left| {H_k^{\rm{u}}} \right|}^2}R_k^{ - \alpha }}}{{\cal G}}> \tau, \frac{{{p_{{\rm{s}}}}G_k^{\rm{d}}{{\left| {H_k^{\rm{d}}} \right|}^2}R_k^{ - \alpha }}}{{\cal G}}> \tau } \right)\\
     & \underline{\underline {\left( {a} \right)}}  \int_0^\infty \int_0^\infty {\frac{\omega}{{B{{\log }_2}\left( {1 + \frac{{{p_{{\rm{u}}}}G_k^{\rm{u}}{{\left| {h_k^{\rm{u}}} \right|}^2}R_k^{ - \alpha }}}{{\cal G}}} \right)}}}\\
&\quad \times \mathbbm{1}\left( {\frac{{{p_{{\rm{u}}}}G_k^{\rm{u}}{{\left| {h_k^{\rm{u}}} \right|}^2}R_k^{ - \alpha }}}{{\cal G}}> \tau} \right)\mathbbm{1}\left( {\frac{{{p_{{\rm{s}}}}G_k^{\rm{d}}{{\left| {h_k^{\rm{d}}} \right|}^2}R_k^{ - \alpha }}}{{\cal G}}> \tau} \right)\\
&\quad \times {f_{\left| {H_k^{\rm{u}}} \right|^2}}\left( h_k^{\rm{u}} \right) {f_{\left| {H_k^{\rm{d}}} \right|^2}}\left( h_k^{\rm{d}} \right)d{h_k^{\rm{u}}}d{h_k^{\rm{d}}}\\
& \underline{\underline {\left( {b} \right)}} \int_{{r_{\min }}}^{{r_{\max }}} {\int_0^\infty \int_0^\infty  {\frac{\omega}{{B{{\log }_2}\left( {1 + \frac{{{p_{{\rm{u}}}}G_k^{\rm{u}}{{\left| {h_k^{\rm{u}}} \right|}^2}r_k^{ - \alpha }}}{{\cal G}}} \right)}} }}\\
&\quad \times \mathbbm{1}\left( {\frac{{{p_{{\rm{u}}}}G_k^{\rm{u}}{{\left| {h_k^{\rm{u}}} \right|}^2}r_k^{ - \alpha }}}{{\cal G}}> \tau} \right)\mathbbm{1}\left( {\frac{{{p_{{\rm{s}}}}G_k^{\rm{d}}{{\left| {h_k^{\rm{d}}} \right|}^2}r_k^{ - \alpha }}}{{\cal G}}> \tau} \right)\\
&\quad \times {{f_{\left| {H_k^{\rm{u}}} \right|^2}}\left( h_k^{\rm{u}} \right)} {f_{\left| {H_k^{\rm{d}}} \right|^2}}\left( h_k^{\rm{d}} \right){f_{{R_k}}}\left( {{r_k}} \right)d{h_k^{\rm{u}}}d{h_k^{\rm{d}}}d{r_k}, 
\end{aligned}
\end{equation}
\noindent where ($a$) is obtained by applying the PDF of the SR fading model and converting the conditional probability into indicator functions; ($b$) is from the PDF of $R_k$ in Lemma 1. Similarly, referring to the derivation above to average $T_k^{\rm{d}}$ and $\left( {k - 1} \right) {T_{{\rm{ho}}}}$, we can obtain (\ref{delayTk}). \hfill $\square$

 \subsection{Proof of Proposition 2}
 \color{black}{
To clearly characterize the dependence of the objective function on the number of serving satellites $K$, we denote $\eta$ by $F(K)$, expressed as $F(K) \triangleq \eta$.

For any $K{\rm{ > 2}}$, define the following quantities for notational simplicity
\begin{equation}
a_k \triangleq \bar q_k P_k(\tau), \qquad b_k \triangleq P_k(\tau).
\end{equation}
Since $P_k(\tau)$ are $\bar q_k{P_k}(\tau)$ are non-increasing with respect to $k$, $T_k$ is non-decreasing with respect to $k$, we have
\begin{equation}
\begin{aligned}
&a_1 \ge a_2 \ge \cdots \ge a_k, \quad
b_1 \ge b_2 \ge \cdots \ge b_k,\\
&T_1 \le T_2 \le \cdots \le T_k.
\end{aligned}
\end{equation}

Define
\begin{equation}
U_K \triangleq \prod_{k=1}^{K}(1-a_k), \qquad
V_K \triangleq \prod_{k=1}^{K}(1-b_k),
\end{equation}
\begin{equation}
Y_K \triangleq (1-U_K)(1-V_K),
\end{equation}
\begin{equation}
W_K \triangleq T_1+\sum_{k=2}^{K}T_kV_{k-1},
\end{equation}
Then, the objective function can be rewritten as
\begin{equation}
F(K)=\frac{Y_K}{W_K}.
\end{equation}

Since $K$ is a discrete variable, unimodality is established by studying the sign of the forward difference $F(K+1)-F(K)$.

From the definitions of $U_K$ and $V_K$, we obtain
\begin{equation}
U_{K+1}=U_K(1-a_{K+1}), \qquad
V_{K+1}=V_K(1-b_{K+1}).
\end{equation}
Hence,
\begin{align}
\Delta Y_K
&\triangleq Y_{K+1}-Y_K \notag\\
&=(1-U_K)b_{K+1}V_K+(1-V_K)a_{K+1}U_K \notag\\
&\quad +a_{K+1}b_{K+1}U_KV_K,
\end{align}
and
\begin{equation}
\Delta W_K \triangleq W_{K+1}-W_K=T_{K+1}V_K.
\end{equation}
Therefore,
\begin{align}
F(K+1)-F(K)
&=\frac{Y_K+\Delta Y_K}{W_K+\Delta W_K}-\frac{Y_K}{W_K} \notag\\
&=\frac{W_K\Delta Y_K-Y_K\Delta W_K}{W_K(W_K+\Delta W_K)}.
\end{align}
Since the denominator is positive, the sign of $F(K+1)-F(K)$ is determined by
\begin{equation}
W_K\Delta Y_K-Y_K\Delta W_K.
\end{equation}

Define the marginal gain-to-cost ratio
\begin{equation}
M_K \triangleq \frac{\Delta Y_K}{\Delta W_K}.
\end{equation}
Then,
\begin{equation}
F(K+1)-F(K)
=\frac{\Delta W_K}{W_K+\Delta W_K}\bigl(M_K-F(K)\bigr).
\end{equation}
Thus,
\begin{equation}
F(K+1)\ge F(K)\iff M_K\ge F(K),
\end{equation}
and
\begin{equation}
F(K+1)\le F(K)\iff M_K\le F(K).
\end{equation}

Therefore, the monotonic behavior of $F(K)$ is fully characterized by the comparison between $M_K$ and $F(K)$.

Using the expressions of $\Delta Y_K$ and $\Delta W_K$, we have
\begin{align}
M_K
&=\frac{(1-U_K)b_{K+1}}{T_{K+1}}
+\frac{(1-V_K)a_{K+1}}{T_{K+1}}\frac{U_K}{V_K} \notag\\
&\quad+\frac{a_{K+1}b_{K+1}U_K}{T_{K+1}}.
\end{align}

Since $a_{K+1}$ and $b_{K+1}$ decrease with $K$, whereas $T_{K+1}$ increases with $K$. From the physical interpretation, this means that when more serving satellites are included, the additional performance gain brought by the ${\left( {k + 1} \right)^{{\rm{th}}}}$ serving satellite becomes smaller, while the corresponding delay cost becomes larger. As a result, the marginal gain-to-cost ratio $M_K$ is non-increasing with respect to $K$, i.e.,
\begin{equation}
M_{K+1}\le M_K.
\end{equation}

Suppose that for some ${K^{'}}$,
\begin{equation}
F(K^{'}+1)\le F(K^{'}).
\end{equation}
Then, from
\begin{equation}
F(K+1)-F(K)
=
\frac{\Delta W_K}{W_K+\Delta W_K}\bigl(M_K-F(K)\bigr),
\end{equation}
we obtain
\begin{equation}
M_{K^{'}}\le F(K^{'}).
\end{equation}

On the other hand,
\begin{equation}
\begin{aligned}
F(K^{'}+1)
=&
\frac{W_{K^{'}}}{W_{K^{'}}+\Delta W_{K^{'}}}F(K^{'})\\
&+
\frac{\Delta W_{K^{'}}}{W_{K^{'}}+\Delta W_{K^{'}}}M_{K^{'}},
\end{aligned}
\end{equation}
which shows that $F(K^{'}+1)$ is a weighted average of $F(K^{'})$ and $M_{K^{'}}$. Therefore,
\begin{equation}
M_{K^{'}}\le F(K^{'}+1)\le F(K^{'}).
\end{equation}

Since $M_K$ is non-increasing, we further have
\begin{equation}
M_{K^{'}+1}\le M_{K^{'}}\le F(K^{'}+1).
\end{equation}
Hence,
\begin{equation}
\begin{aligned}
&F(K^{'}+2)-F(K^{'}+1)
=\\
&\frac{\Delta W_{K^{'}+1}}{W_{K^{'}+1}+\Delta W_{K^{'}+1}}\bigl(M_{K^{'}+1}-F(K^{'}+1)\bigr)
\le 0.
\end{aligned}
\end{equation}

Repeating the same argument recursively yields
\begin{equation}
F(K+1)\le F(K), \qquad \forall K\ge K^{'}.
\end{equation}

Therefore, once $F(K)$ starts to decrease, it will keep decreasing afterwards. This implies that the sequence $\{F(K)\}$ can experience at most one transition from increasing to decreasing. Hence, $F(K)$ is unimodal with respect to $K$ when $K \ge 2$. 

Finally, although the case $K=1$ is written separately for clarity, it is in fact the boundary case of the same objective function. Specifically, substituting $K=1$ into the general expression yields $
F(1)=\frac{\bar q_1\,P_1^2(\tau)}{T_1}$,
which is exactly the separately stated form. Therefore, the piecewise presentation introduces no inconsistency, and the unimodality result applies to the entire discrete domain $K\ge 1$.}  \hfill $\square$
 
This concludes the proof.
}

\bibliographystyle{IEEEtran}
\bibliography{refwork2} 
\newpage
    
\begin{IEEEbiography}
[{\includegraphics[width=1in,height=1.25in,clip,keepaspectratio]{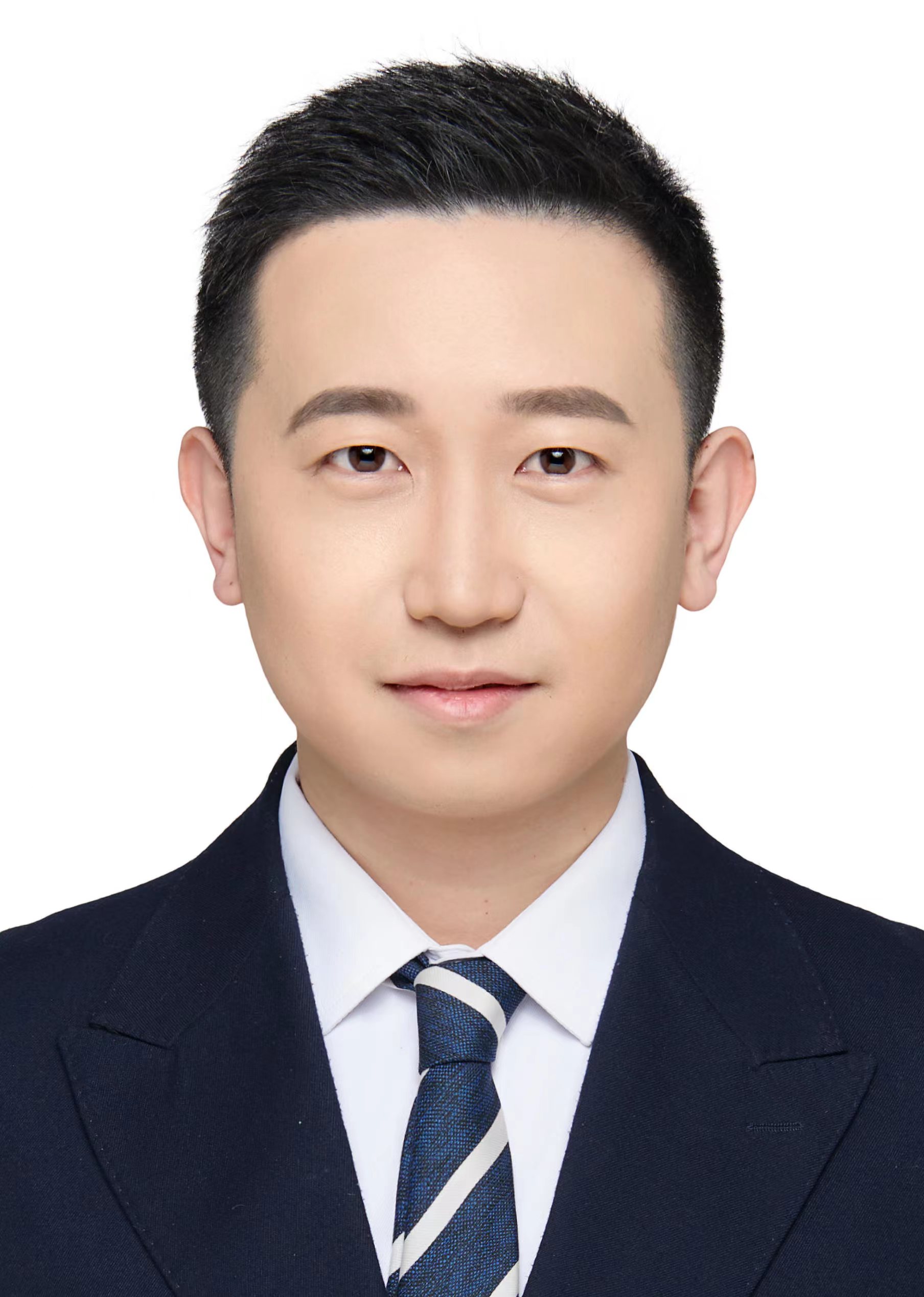}}]{Xu Hu}
(Graduate Student Member, IEEE) received the B.S. degree from Dalian Polytechnic University, Dalian, China, in 2017, and received the M.S. degree from Dalian Maritime University, Dalian, China, in 2020. He is currently pursuing the Ph.D. degree in Information Science and Technology College at Dalian Maritime University, Dalian, China. His research interests include maritime communication networks, LEO satellite communications, network performance analysis, and stochastic geometry.
\end{IEEEbiography}

\begin{IEEEbiography}
[{\includegraphics[width=1in,height=1.25in,clip,keepaspectratio]{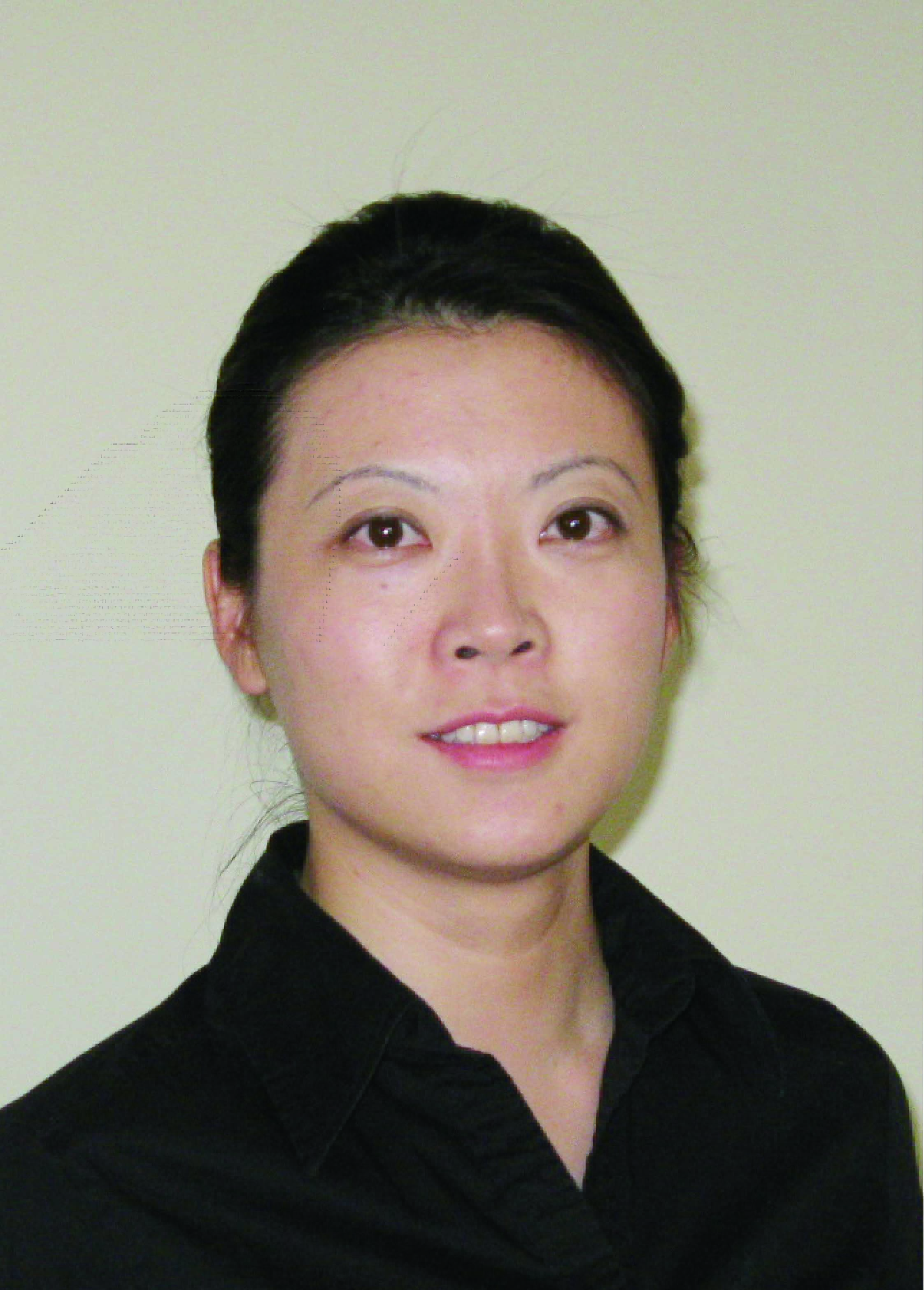}}]{Bin Lin}
(Senior Member, IEEE) received the B.S. and M.S. degrees from Dalian Maritime University, Dalian, China, in 1999 and 2003, respectively and the Ph.D. degree from the Broadband Communications Research Group, Department of Electrical and Computer Engineering, University of Waterloo, Waterloo, ON, Canada, in 2009. She is currently a Full Professor and the Dean of Communication Engineering with the College of Information Science and Technology, Dalian Maritime University. She has been a Visiting Scholar with George Washington University, Washington, DC, USA, from 2015 to 2016. Her current research interests include wireless communications, network dimensioning and optimization, resource allocation, artificial intelligence, maritime communication networks, and Internet of Things. Prof. Lin is an associate editor of IEEE Transactions on Vehicular Technology and IEEE Internet of Things Journal.
\end{IEEEbiography}

\begin{IEEEbiography}
[{\includegraphics[width=1in,height=1.25in,clip,keepaspectratio]{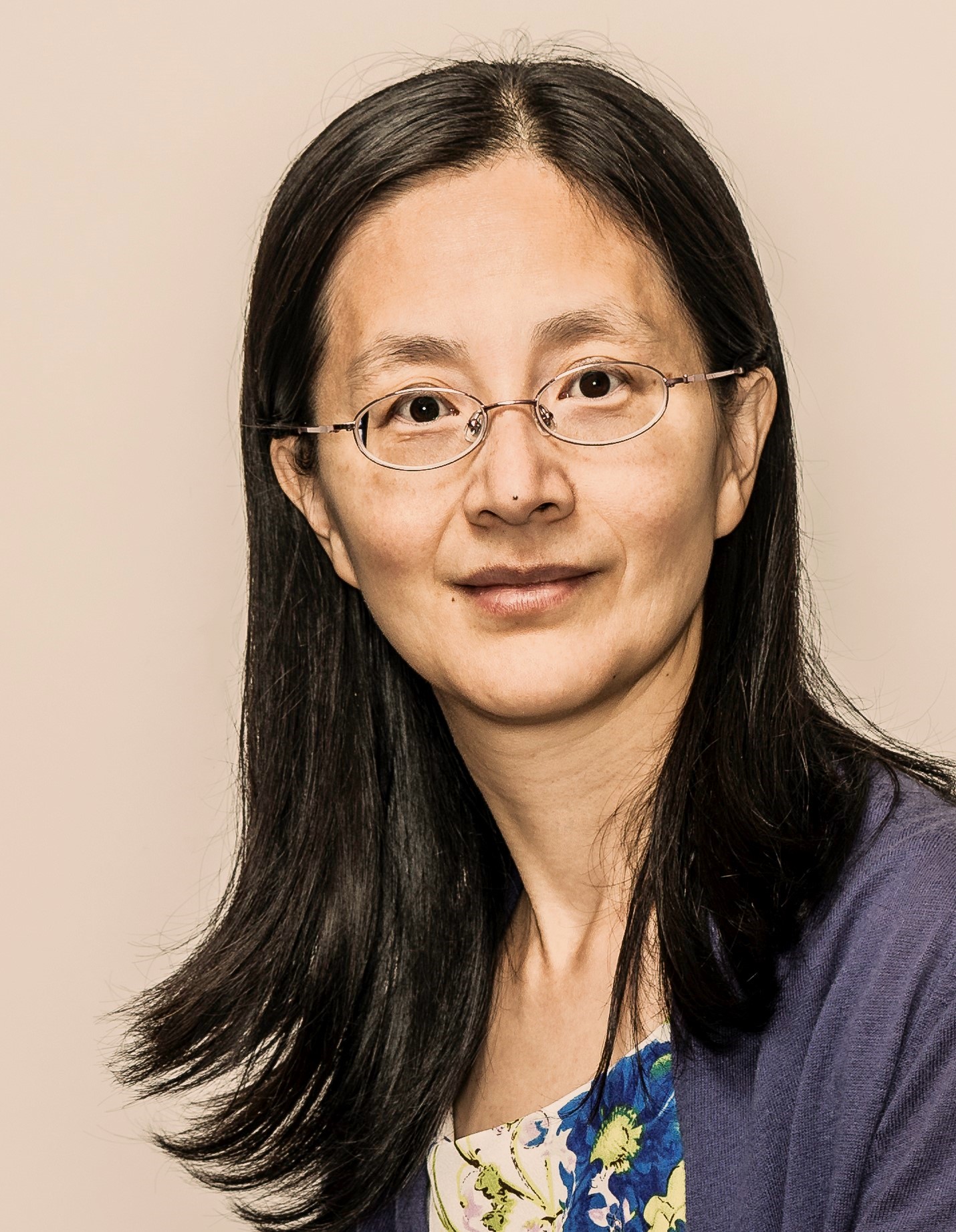}}]{Ping Wang} (Fellow, IEEE) is a Professor at the Department of Electrical Engineering and Computer Science, York University, and a Tier 2 York Research Chair. Prior to that, she was with Nanyang Technological University, Singapore, from 2008 to 2018.  Her research interests are mainly in radio resource allocation, network design, performance analysis and optimization for wireless communication networks, mobile cloud computing and the Internet of Things. Her recent works focus on integrating Artificial Intelligence (AI) techniques into communications networks. Her scholarly works have been widely disseminated through top-ranked IEEE journals/conferences, received 28,000+ citations, and received the Best Paper Awards from IEEE prestigious conference WCNC in 2012, 2020 and 2022, from IEEE Communication Society: Green Communications ${\rm{\& }}$ Computing Technical Committee in 2018, from IEEE flagship conference ICC in 2007. She has been serving as associate editor-in-chief for IEEE Communications Surveys ${\rm{\& }}$ Tutorials and an editor for several reputed journals, including IEEE Transactions on Wireless Communications. She is an IEEE Fellow and a \emph{Distinguished Lecturer of the IEEE Vehicular Technology Society}.
\end{IEEEbiography}

\begin{IEEEbiography}
[{\includegraphics[width=1in,height=1.25in,clip,keepaspectratio]{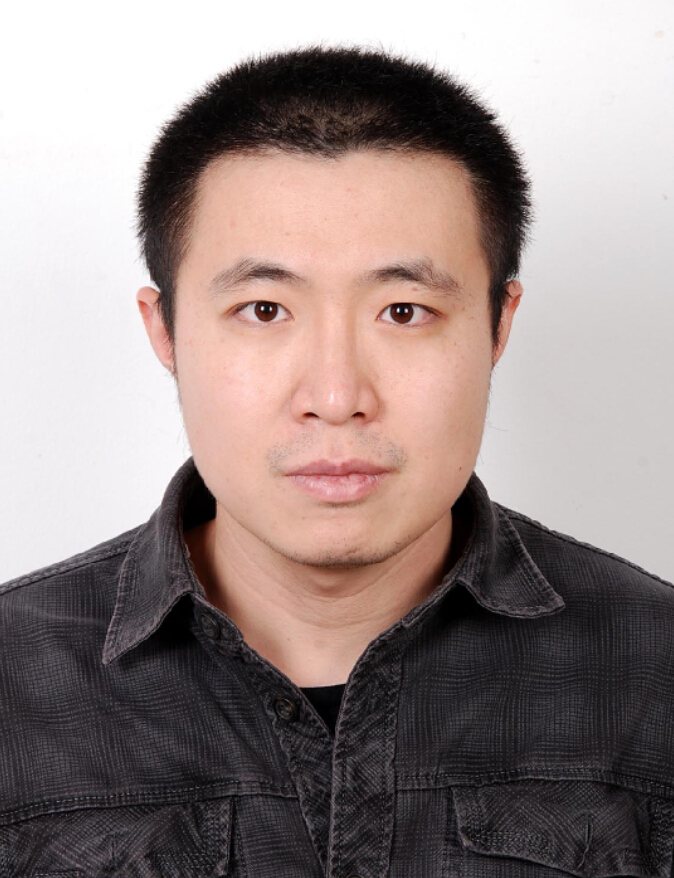}}]{Xiao Lu} (Member, IEEE) received the B.Eng. degree in communication engineering from the Beijing University of Posts and Telecommunications, Beijing, China, the M.Eng. degree in computer engineering from Nanyang Technological University, Singapore, and the Ph.D. degree from the University of Alberta, Edmonton, AB, Canada. He is currently with Ericsson, Canada and York University, Canada. His current research interests include the design, analysis, and optimization of future generation cellular wireless networks. He was a recipient of the Best Paper Award from the IEEE Communication Society: Green Communications ${\rm{\& }}$ Computing Technical Committee in 2018 and the IEEE TCGCC Best Conference Paper.
\end{IEEEbiography}

\begin{IEEEbiography}
[{\includegraphics[width=1in,height=1.25in,clip,keepaspectratio]{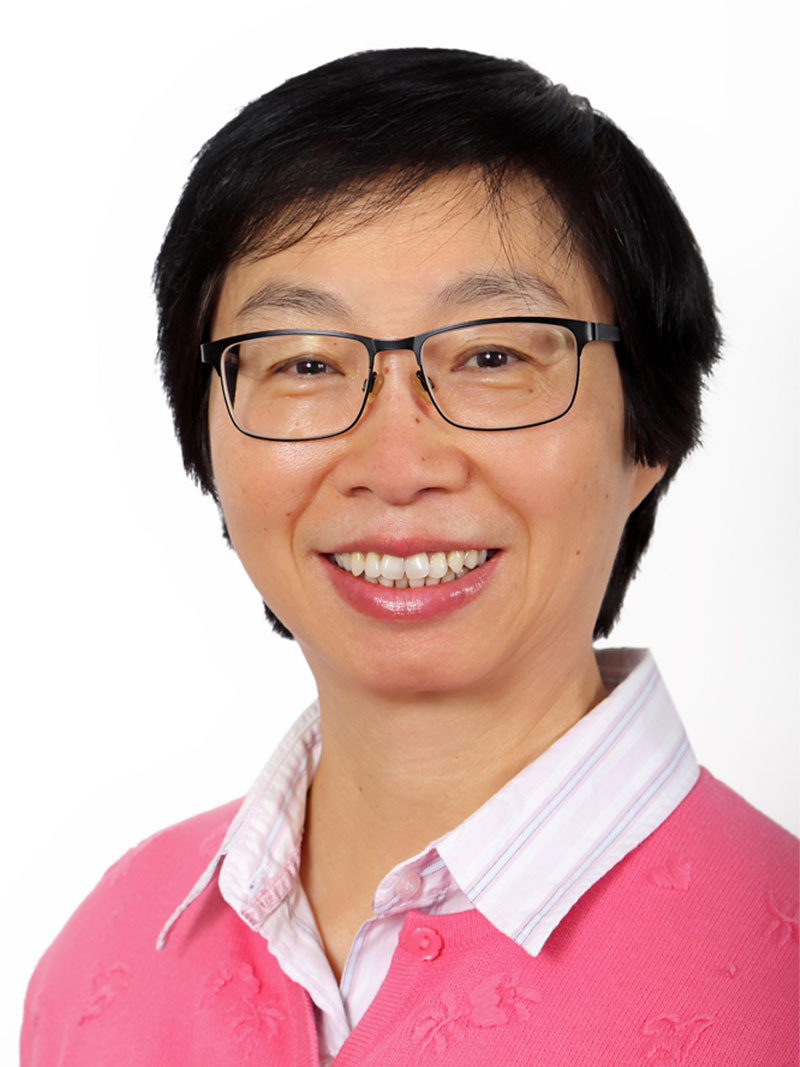}}]{Weihua Zhuang} (Fellow, IEEE) received the B.Sc. and M.Sc. degrees from Dalian Maritime University, China, and the Ph.D. degree from the University of New Brunswick, Canada, all in electrical engineering. Since 1993, she has been a faculty member in the Department of Electrical and Computer Engineering, University of Waterloo, Canada, where she is a University Professor and a Tier I Canada Research Chair in wireless communication networks. Her current research focuses on network architecture, algorithms and protocols, and service provisioning in future communication systems. She is the recipient of Women’s Distinguished Career Award in 2021 from IEEE Vehicular Technology Society, R.A. Fessenden Award in 2021 from IEEE Canada, Award of Merit in 2021 from the Federation of Chinese Canadian Professionals (Ontario), and Technical Recognition Award in Ad Hoc and Sensor Networks in 2017 from IEEE Communications Society. Dr. Zhuang is a Fellow of the IEEE, Royal Society of Canada (RSC), Canadian Academy of Engineering (CAE), and Engineering Institute of Canada (EIC). She is the President and an elected member of the Board of Governors (BoG) of the IEEE Vehicular Technology Society. She was the Editor-in-Chief of IEEE Transactions on Vehicular Technology (2007-2013), an editor of IEEE Transactions on Wireless Communications (2005-2009), General Co-Chair of IEEE/CIC International Conference on Communications in China (ICCC) 2021, Technical Program Committee (TPC) Chair/Co-Chair of IEEE Vehicular Technology Conference 2017 Fall and 2016 Fall, TPC Symposia Chair of the IEEE Globecom 2011, and an IEEE Communications Society Distinguished Lecturer (2008-2011).
\end{IEEEbiography}

\begin{IEEEbiography}
[{\includegraphics[width=1in,height=1.25in,clip,keepaspectratio]{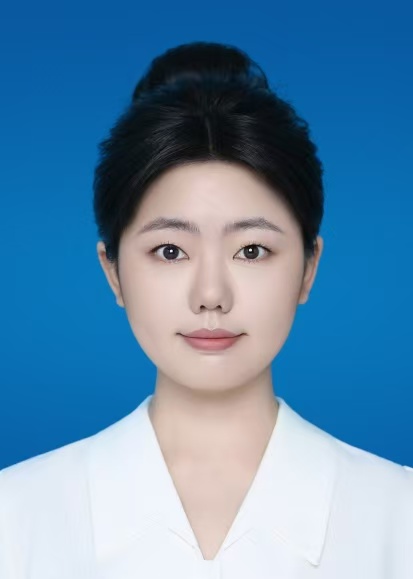}}]{Shuang Qi} Shuang Qi received the B.S. and M.S. degrees from Liaoning Normal University, Dalian, China, in 2017 and 2020, respectively and the Ph.D. degree from the College of Information Science and Technology, Dalian Maritime University, Dalian, China, in 2025. She is currently a lecturer at the School of Electronic and Information Engineering, Liaoning University of Technology. Her research interests include maritime communication networks, mobile edge computing, and resource management.
\end{IEEEbiography}
\vfill

\end{document}